\newcommand{\R}{\mathbb{R}}
\newcommand{\C}{\mathbb{C}}
\newcommand{\id}{\mathbbm{1}}
\newcommand{\diag}{{\rm diag} }
\newcommand{\hastobe}{\stackrel{!}{=}}
\newtheorem{theorem}{Theorem}[section]
\newtheorem{lemma}[theorem]{Lemma}
\newenvironment{proof}[1][Proof:]{\begin{trivlist}
\item[\hskip \labelsep {\bfseries #1}]}{\end{trivlist}}
\newenvironment{proofT}[1][Proof of the theorem:]{\begin{trivlist}
\item[\hskip \labelsep {\bfseries #1}]}{\end{trivlist}}
\newenvironment{definition}[1][Definition:]{\begin{trivlist}
\item[\hskip \labelsep {\bfseries #1}]}{\end{trivlist}}
\newenvironment{example}[1][Example:]{\begin{trivlist}
\item[\hskip \labelsep {\bfseries #1}]}{\end{trivlist}}
\newenvironment{remark}[1][Remark:]{\begin{trivlist}
\item[\hskip \labelsep {\bfseries #1}]}{\end{trivlist}}
\newcommand{\qed}{\hfill\ensuremath{\square}}
\title{A simple explicitly solvable interacting relativistic $N$-particle model}
\author{
Matthias Lienert\thanks{lienert@math.lmu.de} \ and
Lukas Nickel\thanks{nickel@math.lmu.de} \\[0.2cm]
	Mathematisches Institut, Ludwig-Maximilians-Universit\"at\\
	Theresienstr. 39, 80333 M\"unchen, Germany 
}
\date{February 3, 2015}
\begin{document}

\maketitle

\begin{abstract}
\noindent In this paper, we generalize a previous relativistic $1+1$-dimensional model for two mass-less Dirac particles with relativistic contact interactions to the $N$-particle case. Our model is based on the notion of a multi-time wave function which, according to Dirac, is the central object in a relativistic multi-particle quantum theory in the Schrödinger picture. Consequently, we achieve a manifestly Lorentz invariant formulation on configuration space-time. Our model is constructed to be compatible with antisymmetry and probability conservation in a relativistic sense. On the mathematical side, we further develop the method of multi-time characteristics and show that uniqueness of solutions follows from probability conservation. We prove that the model is interacting and outline how one can understand the interaction as effectively given by a $\delta$-potential at equal times. Finally, we answer the question whether Lorentz invariant and probability-conserving 
dynamics can also be obtained when the particles are confined in a region with a non-zero minimal space-like distance, a question relevant for an extension to higher dimensions.
\\

    \noindent \textbf{Keywords:} relativistic interactions, multi-time wave functions, zero-range physics, Dirac equation, boundary conditions
\end{abstract}

\section{Introduction}

In a preceding paper \cite{1d_model}, it was shown that Lorentz transformations of $N$-particle wave functions in the Schrödinger picture lead to the necessity to consider \textit{multi-time wave functions}, i.e.\@ maps 
\begin{equation}
 \psi : \Omega \subset \underbrace{\R^{1+d} \times \cdots \times \R^{1+d}}_{N~{\rm times}} ~\longrightarrow~\mathcal{S},~~~(x_1,..., x_N) ~\longmapsto~\psi (x_1, ... , x_N),
 \label{eq:multitimewavefn}
\end{equation}
where $\mathcal{S}$ is a suitable spin space. The crucial point is that the Lorentz transformation of a simultaneous space-time configuration $((t,\mathbf{x}_1),...,(t,\mathbf{x}_N))$ in general yields $((t_1',\mathbf{x}_1'),...,$ $(t_N',\mathbf{x}_N'))$ where the times $t_k'$ are not equal. This reasoning motivates to regard the set $\mathscr{S}$ of space-like configurations as the appropriate domain $\Omega$. The multi-time wave function $\psi$ generalizes the familiar single-time wave function from Schr\"odinger's theory and gives back the latter for equal times $x_k^0 \equiv t~\forall k$.\\
Following Dirac \cite{dirac_32}, one usually considers $N$ simultaneous first order partial differential equations (PDEs) as evolution equations:
\begin{equation}
 i \frac{\partial}{\partial {x_k^0}} \psi ~=~ H_k \psi,~~k=1,...,N.
 \label{eq:multitimeevolution}
\end{equation}
Schr\"odinger's equation is re-obtained from \eqref{eq:multitimeevolution} by the chain rule for $\psi(t,\mathbf{x}_1,...,t,\mathbf{x}_N)$, with Hamiltonian $H = \sum_{k=1}^N H_k$.\\
The strategy employed in \cite{1d_model} to escape no-go theorems about relativistic interactions such as for interaction potentials \cite{nogo_potentials} was to prescribe the free multi-time Dirac equation on $\mathscr{S}$ and to introduce interaction only by boundary conditions on the set $\mathscr{C}$ of coincidence points in space-time. This idea is related to the field of zero-range physics \cite{albeverio}. It was worked out in detail for two mass-less particles in $1+1$ dimensions ($d=1$). In this case, an explicit solution of the model was feasible via a generalized version of the method of characteristics. This method made it possible to prove existence and uniqueness of classical solutions. Probability conservation on general space-like hypersurfaces, however, had to be checked separately. It led to certain conditions on the conserved tensor current of the theory which were equivalent to linear relations between the components of the wave function. Together with the requirement of Lorentz invariance, this 
allowed to formulate a 
general class of boundary conditions. This class was also compatible 
with antisymmetry of the wave function in the case of indistinguishable particles and led to interaction in the sense that a generic initial product wave function becomes entangled with the time evolution.\\
In this paper, we extend the previous model to the $N$-particle case. In order to achieve a concise formulation, we use a bottom-up approach: instead of starting from all mathematically possible classes of boundary conditions which lead to existence and uniqueness of solutions and successively restricting the class of boundary conditions according to the various physical requirements, we directly specialize on the case of indistinguishable particles, extract a class of physically reasonable boundary conditions and then prove existence and uniqueness for the resulting model.\\
The structure of the paper is as follows. In section \ref{sec:defmodel} we introduce the model as defined by its multi-time equations, domain, initial values and boundary conditions. In sec.\@ \ref{sec:simplifications}, the implications of antisymmetry are discussed and the general solution of the multi-time equations is found using a generalized version of the method of characteristics. Sec.\@ \ref{sec:SectionProbConserv} deals with the question how to formulate probability conservation on space-like hypersurfaces. A class of boundary conditions is extracted which ensures it. Furthermore, we show that the uniqueness of solutions of the multi-time equations follows from probability conservation. In sec.\@ \ref{sec:li} the requirements of Lorentz invariance are checked and the previous class of boundary conditions is shown to satisfy them. The main result of the paper is an existence and uniqueness theorem (sec.\@ \ref{sec:SectionExistenceandUniquenessSpacelikeConf}). In sec.\@ \ref{sec:interaction} we give 
a general argument that the model is interacting and show that one can regard the interaction as given by an effective $\delta$-potential at equal times. Moreover, in sec. \ref{sec:spacelikeconfigs} we answer a different question that was raised in \cite{1d_model}: Do consistent Lorentz invariant and probability-conserving dynamics exist on the set $\Omega_\alpha$ of space-like configurations with a minimum space-like distance $\alpha$? This question is relevant for the question whether one can also achieve interaction effects by boundary conditions in higher space-time dimensions.

\section{Definition of the model} \label{sec:defmodel}

Our model is based on a multi-time wave function \eqref{eq:multitimewavefn} for $N$ mass-less Dirac particles on the set of space-like configurations,
\begin{equation}
 \mathscr{S} ~:=~ \{ (t_1,z_1,...,t_N,z_N) \in \R^{2N} : (t_j-t_k)^2 - (z_j-z_k)^2 < 0~\forall j \neq k\},
 \label{eq:spacelikeconfigs}
\end{equation}
in $1+1$-dimensional space-time, with metric $g = \diag(1,-1)$. The appropriate spin space is $\mathcal{S} = (\C^2)^{\otimes N}$. Thus, $\psi$ has $2^N$ spin components $\psi_i,~i = 1,...,2^N$.\\
As multi-time evolution equations \eqref{eq:multitimeevolution} we use a system of $N$ mass-less Dirac equations
\begin{equation}
\label{eq:multitimedirac}
i  \gamma_k^{\mu} \partial_{k, \mu} \, \psi (x_1, ..., x_N) ~=~ 0 , ~~k= 1,  ..., N.
\end{equation}
Here, $x_k = (t_k,z_k)$, $\partial_{k, \mu} = \frac{\partial}{\partial x_k^{\mu}}$ and $\gamma_k^\mu$ is the $\mu$-th Dirac gamma matrix acting on the spin index of the $k$-th particle. We choose the following representation:
\begin{equation}
 \gamma^0 = \sigma_1 = \left( \begin{array}{cc} 
0 & 1 \\ 1 & 0
\end{array} \right), ~~~ \gamma^1 = \sigma_1 \sigma_3 = \left( \begin{array}{cc} 
0 & -1 \\ 1 & 0
\end{array} \right),
\label{eq:gammamatrices}
\end{equation}
where $\sigma_i,~i = 1,2,3$ denote the Pauli matrices. This representation diagonalizes \eqref{eq:multitimedirac} which can be seen by multiplying eq.\@ \eqref{eq:multitimedirac} with $\gamma^0_k$ from the left. This results in
\begin{equation}
  \left( \frac{\partial}{\partial t_k} + \sigma_{3, k} \frac{\partial}{\partial z_k} \right) \psi (t_1, z_1, ..., t_N, z_N) ~=~ 0,~~k= 1,  ..., N,
 \label{eq:multitimedirac2}
\end{equation}
where $\sigma_{3, k}$ is the third Pauli matrix, $\sigma_3 = \diag(1,-1)$, acting on the spin index of the $k$-th particle.\\
Initial data are prescribed on the set
\begin{equation}
 \mathcal{I}~:=~ \{ (t_1,z_1,...,t_N,z_N) \in \overline{\Omega} : t_1 = \dots = t_N = 0 \}.
 \label{eq:initialdata}
\end{equation}
Since $ \mathscr{S}$ has a non-empty boundary  $\partial  \mathscr{S}$, one should expect that boundary conditions are needed to ensure the uniqueness of solutions. At this point, we leave open the exact nature of the boundary conditions. It will be clarified by further considerations about Lorentz invariance and probability conservation.\\
We summarize the structure of the model as follows:
\begin{equation}
 \left\lbrace \begin{array}{l} \mathrm{the \ system \ of \ equations \ \eqref{eq:multitimedirac2} \ on \ } \Omega = \mathscr{S},
\\ \mathrm{initial \ conditions \ on \ } \mathcal{I},
\\ \mathrm{boundary \ conditions \ on \ } \partial \Omega.
\end{array}  \right.
\label{eq:model}
\end{equation}

\section{Antisymmetry, general solution and multi-time characteristics} \label{sec:simplifications}

In this section, we first show how antisymmetry of the wave function for indistinguishable particles allows to reduce the domain from $N$ disconnected parts to a single connected one. Using a notation for the spin components which is tailor-made for the multi-time equations \eqref{eq:multitimedirac2} we show how this facilitates to explicitly determine their general solution. This leads to the notion of multi-time characteristics.

\subsection{Antisymmetry and reduction of the domain}
Following the spirit outlined in the introduction, we make simplifications wherever physically reasonable in order to achieve a concise model for which existence and uniqueness can be proved elegantly. The first simplification is the assumption of indistinguishable particles. This is natural considering that the particles are not dynamically distinguished by eqs.\@ \eqref{eq:multitimedirac} alone.
 Denote the spin components of $\psi$ by $\psi_{s_1 ... s_N}$ where each $s_i$ can take the values $\pm 1$. We write
\begin{equation}
 \left( \begin{array}{c}
\psi_1 \\ \psi_2 \\ \psi_3 \\ \vdots \\ \psi_{2^N} 
\end{array} \right) ~\equiv~ \left( \begin{array}{c}
\psi_{--\dots--} \\ \psi_{--\dots -+} \\ \psi_{--\dots +-} \\ \vdots \\ \psi_{++\dots++} 
\end{array} \right).
 \label{eq:spincpts}
\end{equation}
Indistinguishability implies the following antisymmetry condition for the wave function. Let $\pi \in S^N$ be a permutation. Then
\begin{equation}
 \psi_{s_{\pi(1)} ... s_{\pi(N)}}(x_{\pi(1)},...,x_{\pi(N)})~\stackrel{!}{=}~ (-1)^{\mathrm{sgn}(\pi)} \psi_{s_1 ... s_N}(x_1,...,x_N).
 \label{eq:antisymmetrycond}
\end{equation}
We now use this condition to relate a solution of \eqref{eq:model} on the different parts of the domain $\Omega = \mathscr{S}$ (see eq.\@ \eqref{eq:spacelikeconfigs}). Note that in one spatial dimension, $\mathscr{S}$ separates into $N!$ disjoint parts which can be classified according to the relation of the spatial coordinates $z_k$, e.g. $z_2 < z_1 < z_5 < z_3 < \dots $. Using the permutation group $S^N$, we write $\mathscr{S}$ as the disjoint union of open sets as follows:
\begin{align}
  \mathscr{S}~ &\,=~\bigsqcup_{\pi \in S^N} \mathscr{S}_{\pi},\nonumber\\
\mathrm{where}~~~\mathscr{S}_{\pi} ~&:=~ \left\{(t_1, z_1, ..., t_N, z_N) \in \mathscr{S} : z_{\pi(1)} < \dots < z_{\pi(N)} \right\}.
\label{eq:omegapi}
\end{align}
The crucial point is the following: given a solution of the model, as defined by \eqref{eq:model} on $\mathscr{S}_1$ (corresponding to $z_1 < \dots < z_N$), antisymmetric continuation via eq.\@ \eqref{eq:antisymmetrycond} yields a solution on $\mathscr{S}_\pi$, provided the boundary and initial conditions are chosen to be compatible with antisymmetry. Note that this reduces the possible classes of initial boundary value problems (IBVPs) \eqref{eq:model} to an autonomous IBVP on $\mathscr{S}_1$. We shall employ this strategy in the following. Our new model may be summarized according to \eqref{eq:model} with $\mathscr{S}$ replaced by $\mathscr{S}_1$.

\subsection{Multi-time characteristics and general solution}
Using the notation \eqref{eq:spincpts}, we express the diagonalized multi-time Dirac equations \eqref{eq:multitimedirac2} for a fixed component $\psi_{s_1 ... s_N}$ as follows:
\begin{equation}
 \left( \frac{\partial}{\partial t_k} -  s_k \frac{\partial}{\partial z_k} \right) \psi_{s_1 ... s_k... s_N} ~=~ 0,~~k = 1, ..., N.
 \label{eq:multitimediracspincpts}
\end{equation}
Note that \eqref{eq:multitimediracspincpts} imposes $N$ equations for each of the $2^N$ spin components $\psi_{s_1 ... s_N}$. This simple form of the equations allows to find the general solution.

\begin{lemma}
 The general solution of eqs.\@ \eqref{eq:multitimediracspincpts} is given by
 \begin{equation}
  \psi_{s_1...s_N}(t_1,z_1,...,t_N,z_N) ~=~ f_{s_1...s_N} \left( z_1 +s_1 t_1, ..., z_N +s_N t_N \right)
  \label{eq:generalsolution}
 \end{equation}
 where $f_{s_1...s_N} \hspace{-0.1cm} : \R^N \rightarrow \C$ are $C^1$-functions, $s_1 = \pm 1, ..., s_N = \pm 1$.
 \label{thm:generalsolution}
\end{lemma}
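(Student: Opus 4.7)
The proof plan is to exploit the fact that the system in \eqref{eq:multitimediracspincpts} consists of $N$ \emph{decoupled} transport equations, each involving only the variables $(t_k,z_k)$ for a single $k$. The sufficiency part is immediate: applying $\partial_{t_k}-s_k\partial_{z_k}$ to the right-hand side of \eqref{eq:generalsolution} uses the chain rule only in the $k$-th argument of $f_{s_1\ldots s_N}$ and gives $s_k\cdot(\partial_k f_{s_1\ldots s_N})-s_k\cdot(\partial_k f_{s_1\ldots s_N})=0$. So the real content is the converse: every $C^1$ solution has the form \eqref{eq:generalsolution}.

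For the converse I would introduce the global linear change of variables $u_k := z_k+s_k t_k$, $w_k := t_k$, for $k=1,\dots,N$ (inverse: $t_k=w_k$, $z_k=u_k-s_k w_k$). A brief chain-rule computation shows that, at fixed $\{u_j\}_{j=1}^N$ and $\{w_j\}_{j\neq k}$, one has $\partial_{w_k}=\partial_{t_k}-s_k\partial_{z_k}$. Setting $\widetilde\psi(u_1,w_1,\dots,u_N,w_N):=\psi_{s_1\ldots s_N}(t_1,z_1,\dots,t_N,z_N)$, the multi-time Dirac system \eqref{eq:multitimediracspincpts} is transformed into $\partial_{w_k}\widetilde\psi=0$ for every $k=1,\dots,N$.

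The remaining point is to conclude that $\widetilde\psi$ is in fact independent of all $w_k$'s. For this one needs the domain, when expressed in the new coordinates, to be connected along each $w_k$-fiber. Here I would use that the relevant domain $\mathscr{S}_1$ is carved out by the linear inequalities $z_1<\dots<z_N$ together with the pairwise space-like conditions which, once the ordering of the $z_k$ is fixed, also reduce to linear inequalities of the form $-(z_j-z_k)<t_j-t_k<z_j-z_k$. Hence $\mathscr{S}_1$ is an open convex subset of $\R^{2N}$, and its image under the linear bijection above is again open and convex, in particular connected along each coordinate axis $w_k$. Therefore $\partial_{w_k}\widetilde\psi\equiv 0$ for all $k$ forces $\widetilde\psi$ to depend only on $(u_1,\dots,u_N)$. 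Defining $f_{s_1\ldots s_N}(u_1,\dots,u_N):=\widetilde\psi(u_1,w_1,\dots,u_N,w_N)$ (the right-hand side being independent of the $w_k$) and translating back to the original coordinates yields \eqref{eq:generalsolution} with $f_{s_1\ldots s_N}\in C^1$.

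I do not foresee any real obstacle: the only step beyond routine characteristic calculus is the mild geometric verification that the domain is convex, which is transparent. If the authors wish to state the lemma without reference to the domain structure of $\mathscr{S}_1$, one can equivalently argue locally (on any connected open set) and invoke the standard fact that a $C^1$ function on a connected open set with all $w_k$-derivatives vanishing factors through the projection onto the $u$-coordinates.
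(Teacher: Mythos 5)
Your proof is correct and takes essentially the same route as the paper, whose own proof simply writes out the $N$ decoupled transport equations $\left(\partial_{t_k}-s_k\partial_{z_k}\right)\psi_{s_1\ldots s_N}=0$ and asserts that they imply the form \eqref{eq:generalsolution}. You have merely made the standard characteristics argument explicit, including the change to characteristic coordinates and the (correct) observation that the relevant domains ($\R^{2N}$ or the convex set $\mathscr{S}_1$) are connected along each $w_k$-fiber, which the paper leaves implicit.
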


\begin{proof}
 The result is obvious if one is familiar with the notation. Simply write out eq.\@ \eqref{eq:multitimediracspincpts} for $\psi_{s_1 ... s_N}$ separately: $\left( \tfrac{\partial}{\partial t_1} -  s_1 \tfrac{\partial}{\partial z_1} \right)\psi_{s_1 ... s_N} = 0$, ..., $\left( \tfrac{\partial}{\partial t_N} -  s_N \tfrac{\partial}{\partial z_N} \right)\psi_{s_1 ... s_N} = 0$. This implies the form \eqref{eq:generalsolution}. \qed
\end{proof}

The form of the general solution motivates the following definition.

\begin{definition}
Let $ p = (t_1, z_1, \dots t_N, z_N) \in \mathbb{R}^{2N}$. Then we call
\begin{equation}
 c_k ~:=~  z_k +s_k t_k
\label{eq:characteristicvalues}
\end{equation}
the \textit{characteristic values} at $p$ associated with the component $\psi_{s_1 ... s_N}$.\\
Furthermore, we define the \textit{multi-time characteristics} of the components $\psi_{s_1 ... s_N}$ by
\begin{equation}
 S_{s_1 ... s_N} (c_1, ..., c_N) ~:=~ \{ (t_1,z_1,...,t_N,z_N) \in \R^{2N} : z_k + s_k t_k = c_k\}.
 \label{eq:multitimecharacteristic}
\end{equation}
\end{definition}

With these definitions, one can reformulate lemma \ref{thm:generalsolution} as follows: the components $\psi_{s_1 ... s_N}$ of solutions of \eqref{eq:multitimediracspincpts} are constant on the respective multi-time characteristics \eqref{eq:multitimecharacteristic}. Note that this implies existence and uniqueness on the domain $\mathbb{R}^{2N}$ for an initial value problem at $t_1 = ... = t_N = 0$, the functions $f_{s_1...s_N}$ being given by the initial values. However, as known from \cite{1d_model}, this is in general not true for a domain with boundary such as $\mathscr{S}_1$.

%
%
%
%

\section{Probability conservation} \label{sec:SectionProbConserv}

In this section, we specify an adequate notion of probability conservation on space-like hypersurfaces using the conserved tensor current of the multi-time Dirac equations (sec.\@ \ref{sec:relprobcons}). This enables us to give a sufficient condition on the components of the wave-function which leads to probability conservation (sec.\@ \ref{sec:probconsbdc}). Furthermore, we prove a general theorem showing that probability conservation implies the uniqueness of solutions (sec.\@ \ref{sec:probconsunique}).

\subsection{A geometric formulation of probability conservation} \label{sec:relprobcons}

It is clear that the non-relativistic notion of probability conservation, $\int d^d x_1 \cdots d^d x_N $ $|\psi|^2 (t,\mathbf{x}_1,...,t,\mathbf{x}_N) = 1 \ \forall t$, which heavily draws on a notion of simultaneity, has to be generalized in a relativistic context. Building on previous work, such a generalization was given and justified in \cite[sec.\@ 4]{1d_model}. It makes use of the conserved tensor current of the multi-time Dirac equations \eqref{eq:multitimedirac}, i.e.\@
\begin{equation}
j^{\mu_1 ... \mu_N} ~:=~ \overline{\psi} \gamma^{\mu_1}_1 \dots \gamma^{\mu_N}_N \psi~~~\mathrm{with}~~~\partial_{k,\mu_k} j^{\mu_1...\mu_k...\mu_N} ~=~ 0~\forall k.
\label{eq:j}
\end{equation}
 We define the \textit{current form} as follows:
\begin{flalign}
\omega_j ~ := ~\sum_{\mu_1,...,\mu_N = 0}^{d} (-1)^{\mu_1 + \dots + \mu_N} j^{\mu_1 ... \mu_N} & (dx^0_1 \wedge \dots \widehat{dx^{\mu_1}_1} \wedge \dots \wedge dx^d_1)\nonumber \\
  \wedge\cdots \wedge & (dx^0_N \wedge \dots \widehat{dx^{\mu_N}_N} \dots \wedge dx^d_N)
\label{eq:omegaj}
\end{flalign}
where $\widehat{(\cdot)}$ denotes omission from the wedge product. The continuity equations \eqref{eq:j} imply that the exterior derivative of $\omega_j$ vanishes, i.e.\@ $d \omega_j=0$. The relativistic notion of probability conservation on a domain $\Omega$ then reads \cite{1d_model}:
\begin{equation}
 \int_{\Sigma^N \cap \, \Omega}  \omega_j ~=~  \int_{(\Sigma')^N \cap \,  \Omega}  \omega_j
 \label{eq:probcons}
\end{equation}
for all pairs of space-like hypersurfaces\footnote{We assume all space-like hypersurfaces to be smooth for the rest of the paper.} $\Sigma,\Sigma'$.

\subsection{Boundary conditions from probability conservation} \label{sec:probconsbdc}

The formulation via the $Nd$-form $\omega_j$ together with the property $d \omega_j = 0$ makes it possible to use Stokes' theorem to extract conditions on $\omega_j$ and thereby on $j$ which ensure probability conservation.

\begin{lemma}
Probability conservation on $\mathscr{S}_1$ in the sense of
\begin{equation} \label{eq:probconservconditiononOmega1}
 \int_{\Sigma^N \cap \, \mathscr{S}_1}  \omega_j ~=~  \int_{(\Sigma')^N \cap \,  \mathscr{S}_1}  \omega_j
\end{equation}
for all space-like hypersurfaces $\Sigma, \Sigma'$ holds if the wave function $\psi$ is compactly supported on all sets of the form $\Sigma^N \cap \mathscr{S}_1$ and if
 \begin{equation}\label{eq:omegajiszero}
\left. \omega_j \right|_{\mathscr{C}_1}~ =~ 0,
\end{equation}
 where
\begin{equation}
\mathscr{C}_1 :=  \left\lbrace (t_1, z_1, \dots t_N, z_N) \in \partial \mathscr{S}_1 \,|
\ \exists k: t_k = t_{k+1} \wedge z_k = z_{k+1} \right\rbrace.
\label{eq:c1}
\end{equation}
\label{thm:probcons}
\end{lemma}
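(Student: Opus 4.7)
The natural approach is Stokes' theorem applied to an $(N{+}1)$-dimensional tube in $\R^{2N}$ whose ends are $\Sigma^N \cap \mathscr{S}_1$ and $(\Sigma')^N \cap \mathscr{S}_1$. Since $\omega_j$ is a closed $N$-form -- this is exactly the content of the continuity equations in \eqref{eq:j} -- the integral of $\omega_j$ over the boundary of such a tube vanishes, and the three hypotheses of the lemma are then used to kill every unwanted contribution.

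First I would pick a smooth one-parameter family of space-like hypersurfaces $\Sigma_\tau$, $\tau \in [0,1]$, interpolating between $\Sigma_0 = \Sigma$ and $\Sigma_1 = \Sigma'$. In $1{+}1$ dimensions this is fairly routine: after an appropriate rotation a space-like curve may be written as a graph $t = f(z)$ with $|f'| < 1$, a condition preserved under convex combinations. Given such a family, I would set
\[
V \;:=\; \bigcup_{\tau \in [0,1]} \bigl( \Sigma_\tau^N \cap \mathscr{S}_1 \bigr),
\]
an oriented $(N{+}1)$-dimensional submanifold of $\R^{2N}$.

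The geometric observation that makes everything work is that, because each $\Sigma_\tau$ is space-like in $\R^{1+1}$, any two distinct points on $\Sigma_\tau$ are strictly space-like separated in the Minkowski sense. Hence $\Sigma_\tau^N \subset \overline{\mathscr{S}}$, and a point of $\Sigma_\tau^N$ can leave $\mathscr{S}_1$ only when $z_k = z_{k+1}$ for some $k$, which on $\Sigma_\tau^N$ forces $t_k = t_{k+1}$ as well; such a point therefore lies in $\mathscr{C}_1$. Consequently, with appropriate orientations, the boundary of $V$ decomposes as
\[
\partial V \;=\; -\bigl( \Sigma^N \cap \mathscr{S}_1 \bigr) \,+\, \bigl( (\Sigma')^N \cap \mathscr{S}_1 \bigr) \,+\, \Gamma,
\]
where $\Gamma$ is supported in $\mathscr{C}_1$ together with the ``boundary at spatial infinity'' of the slices $\Sigma_\tau^N \cap \mathscr{S}_1$.

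Finally, applying Stokes' theorem and using $d\omega_j = 0$ yields $\int_{\partial V} \omega_j = 0$. The infinity part of $\Gamma$ contributes nothing thanks to the compact-support assumption on $\psi$, and the $\mathscr{C}_1$ part contributes nothing by \eqref{eq:omegajiszero}; what remains is exactly the equality \eqref{eq:probconservconditiononOmega1}. The main obstacle I foresee is not the cancellation itself but the geometric bookkeeping: constructing the interpolating family $\{\Sigma_\tau\}$ cleanly, checking that the boundary decomposition above is correct (including signs and corner contributions coming from the product structure $\Sigma_\tau^N$), and verifying that the only points of $\partial \mathscr{S}_1$ that $\Sigma_\tau^N$ can meet are indeed the coincidence points in $\mathscr{C}_1$.
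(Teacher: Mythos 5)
Your proposal is correct and follows essentially the same route as the paper: the authors build exactly this tube (their $V_R$, defined via the linear interpolation $t_k = t_\Sigma(z_k) + \tau\,(t_{\Sigma'}(z_k)-t_\Sigma(z_k))$ with a spatial cutoff $|z_k|\le R$), observe as you do that every slice lies in $\mathcal{S}^N$ for some space-like $\mathcal{S}$ so that the lateral boundary on $\partial\mathscr{S}_1$ is contained in $\mathscr{C}_1$, and then apply Stokes' theorem with $d\omega_j=0$, letting $R\to\infty$ to discard the spatial-infinity contribution via compact support. The geometric bookkeeping you flag as the main obstacle is handled in the paper exactly along the lines you sketch.
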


\begin{remark}
\begin{enumerate}
 \item The assumption of compact support of the wave function (or, alternatively, of suitable drop-off conditions) with respect to spatial directions is needed as a technical assumption in the proof. It is reasonable because the multi-time Dirac equations have finite propagation speed (see eq.\@ \eqref{eq:generalsolution}). Consequently, compactly supported initial data imply the desired property.
 \item Note that the wave function is, strictly speaking, not defined on $\partial \mathscr{S}_1$. When using values of the wave function at the boundary (such as in eq.\@ \eqref{eq:omegajiszero}), we assume that the wave function is continuous\footnote{The assumption of continuity is justified in sec.\@ \ref{sec:SectionExistenceandUniquenessSpacelikeConf} where it is shown that a unique $C^k$ solution exists for an appropriate IBVP.} and refer to the corresponding limit in $\mathscr{S}_1$. In this way, jumps of the wave function across the boundaries of different $\mathscr{S}_\pi$ are admitted. In fact, singularities of this kind are typical for zero-range interactions \cite[appendix J]{albeverio}.
\end{enumerate}

\end{remark}

\begin{proof}
We adopt the idea of \cite[proof of thm.\@ 4.4]{1d_model} and generalize it for $N$ particles. Let $\Sigma,\Sigma'$ be space-like hypersurfaces. We construct a suitable submanifold with boundary in order to be able to use Stokes' theorem.\\
Let $t_{\Sigma} (z)$ denote the time coordinate of the unique point $p = (t_{\Sigma} (z),z) \in \Sigma$.
Let $R > 0$ and consider the following set:
\begin{equation} \label{eq:StokesVolumen}
V_R ~:=~ \left\lbrace (t_1, z_1, \dots, t_N, z_N) \in \overline{\mathscr{S}}_1 \left| \begin{array}{c} \exists \tau \in \left[ 0, 1 \right]: \forall k: t_k = t_{\Sigma} (z_k) + \tau \left( t_{\Sigma'} (z_k) - t_{\Sigma} (z_k) \right)\\ \mathrm{and} \ |z_k| \leq R \end{array} \right. \right\rbrace
\end{equation}
$V_R$ is a bounded and closed, thus compact, $(N+1)$-dimensional submanifold of $\R^{2N}$ with boundary
\begin{equation}
\partial V_R ~=~ (\Sigma^N \cap \mathscr{S}_1) \cup ((\Sigma')^N \cap \mathscr{S}_1) \cup M_1 \cup M_2
\end{equation}
where $M_2$ is the subset of $V_R$ with $|z_k| = R$ for some $k$ and
\begin{equation}
M_1 ~=~ V_R \cap \partial \mathscr{S}_1.
\end{equation} 
Because of the first condition in the definition of $V_R$, a configuration in $V_R$ is always an element of $\mathcal{S}^N$ for some space-like hypersurface $\mathcal{S}$. Therefore, it can only be an element of $M_1 \subset \partial \mathscr{S}_1$ (i.e.\@\texttt{} light-like) if $\exists k: t_k = t_{k+1}$ and $z_k = z_{k+1}$. This implies $M_1 \subset \mathscr{C}_1$.\\
In the limit $R \rightarrow \infty$, the integral $\int_{M_2} \omega_j$ vanishes because of the compact support of the wave function. Thus, it follows from the the theorem of Stokes, together with $d \omega_j = 0$, that
\begin{equation}
0 ~=~ \lim_{R \rightarrow \infty} \int_{V_R} d \omega_j = \lim_{R \rightarrow \infty} \int_{\partial V_R} \omega_j =  - \int_{\Sigma^N \cap \mathscr{S}_1} \omega_j + \int_{(\Sigma')^N \cap \mathscr{S}_1} \omega_j + \int_{M_1} \omega_j  .
\end{equation}
The minus sign in front of the first integral on the r.h.s.\@ is due to orientation conventions. Thus, probability conservation in the sense of eq.\@ \eqref{eq:probconservconditiononOmega1} holds iff $\int_{M_1} \omega_j = 0$. In order to make this integral vanish for all possible choices of $\Sigma, \Sigma'$, the condition 
\begin{equation}
\left. \omega_j \right|_{\mathscr{C}_1} ~\stackrel{!}{=}~ 0
 \label{eq:omegajcondproof}
\end{equation}
has to be satisfied. \qed
\end{proof}

Next, we study the implications of condition \eqref{eq:omegajcondproof} for the components of the wave function. For simplicity, we first focus on the special case of equal-time hypersurfaces in a fixed but otherwise arbitrary Lorentz frame.

\begin{lemma} \label{thm:karltheodor}
Let $\mathscr{C}_{1,t} := \{ (t_1,z_1,...,t_N,z_N) \in \mathscr{C}_1 :  t_1 = \dots = t_N\}$. Then the condition for probability conservation on equal-time hypersurfaces $\Sigma_t$ in a particular Lorentz frame, i.e.\@ \eqref{eq:omegajcondproof} with $\mathscr{C}_1$ replaced by $\mathscr{C}_{1,t}$, holds if and only if the following condition is satisfied:
\begin{equation}
\label{eq:EquationforPsiThatOmegaVanishes}
\psi^{\dagger}(p) \left( \sigma_{3,k} - \sigma_{3, k+1} \right) \psi(p) ~=~ 0 ~~~\forall p \in \mathscr{C}_{1,t}^{(k)}~~\forall k=1, ..., N-1,
\end{equation}
where $\mathscr{C}_{1,t}^{(k)} := \left\lbrace (t_1,z_1,...,t_N,z_N) \in \mathscr{C}_{1,t} : z_k = z_{k+1}\right\rbrace$.\\
Furthermore, eq.\@ \eqref{eq:EquationforPsiThatOmegaVanishes} can be rewritten as
\begin{equation} \label{eq:EquationforPsiinComponentsthatOmegavanishes}
\sum_{\substack{(s_1,\dots,s_N) \in \left\lbrace -, + \right\rbrace^N \\ s_k \neq s_{k+1}}}
s_{k+1} |\psi_{s_1 \dots s_N}|^2(p) ~=~ 0~~\forall p \in \mathscr{C}_{1,t}^{(k)}.
\end{equation}
\end{lemma}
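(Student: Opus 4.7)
The plan is to compute the pullback of the $N$-form $\omega_j$ to the submanifold $\mathscr{C}_{1,t}^{(k)}$ by parametrising the latter explicitly and exploiting two sources of vanishing, and then to translate the resulting pointwise condition on the current into a condition on $\psi$ via the gamma matrix representation \eqref{eq:gammamatrices}.

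First I would parametrise $\mathscr{C}_{1,t}^{(k)}$ by $(t, z_1, \dots, z_{k-1}, z_k, z_{k+2}, \dots, z_N)$ with $t_1 = \dots = t_N = t$ and $z_{k+1} = z_k$, so that $dt_i = dt$ for every $i$ and $dz_{k+1} = dz_k$ after pullback. Inserting this into \eqref{eq:omegaj} (recall $d=1$, so each bracketed factor is either $dz_i$ when $\mu_i=0$ or $dt_i$ when $\mu_i=1$), any term with two or more indices equal to $1$ carries two copies of $dt$ and vanishes; any remaining term containing both $dz_k$ and $dz_{k+1}$ also vanishes. The only surviving contributions are those where exactly one index equals $1$ and that index is either $k$ or $k+1$; all other terms must contain both spatial differentials $dz_k$ and $dz_{k+1}$ (they only omit one spatial slot), and therefore vanish after identifying $dz_{k+1} = dz_k$.

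Next, I would collect the two surviving terms. A careful sign count of the reordering of the wedge (moving $dt$ to the front past $k-1$ resp. $k$ factors of $dz$) yields, up to a common nowhere-vanishing top form $dt \wedge dz_1 \wedge \dots \wedge \widehat{dz_{k+1}} \wedge \dots \wedge dz_N$, the combination $j^{0\cdots 1_k \cdots 0} - j^{0\cdots 1_{k+1} \cdots 0}$. Using that for the representation \eqref{eq:gammamatrices} one has $\gamma^0\gamma^0 = \id$ and $\gamma^0\gamma^1 = \sigma_3$, the current components evaluate to $j^{0\cdots 1_\ell \cdots 0} = \psi^{\dagger}\sigma_{3,\ell}\psi$. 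So $\left.\omega_j\right|_{\mathscr{C}_{1,t}^{(k)}} = 0$ holds iff \eqref{eq:EquationforPsiThatOmegaVanishes} holds.

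Finally, for the component form, I would expand $\psi = \sum_{s_1,\dots,s_N} \psi_{s_1\dots s_N} |s_1,\dots,s_N\rangle$ in the eigenbasis of the $\sigma_{3,\ell}$. Matching the passage from \eqref{eq:multitimedirac2} to \eqref{eq:multitimediracspincpts} shows that $\sigma_{3,\ell}$ acts on $\psi_{s_1\dots s_N}$ as multiplication by $-s_\ell$, so $\psi^\dagger(\sigma_{3,k} - \sigma_{3,k+1})\psi = \sum_{s_1,\dots,s_N} (s_{k+1} - s_k)|\psi_{s_1\dots s_N}|^2$. The coefficient vanishes when $s_k = s_{k+1}$, and equals $2 s_{k+1}$ when $s_k \neq s_{k+1}$, which immediately yields \eqref{eq:EquationforPsiinComponentsthatOmegavanishes} up to the irrelevant factor of $2$.

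The main obstacle is bookkeeping: correctly identifying which wedges survive the two degeneracies ($dt_i$ all collapse; $dz_{k+1}$ collapses to $dz_k$) and tracking the signs generated by reordering, so that the answer is unambiguously the difference $j^{0\cdots 1_k\cdots 0} - j^{0\cdots 1_{k+1}\cdots 0}$ rather than, say, a sum. The algebraic steps afterwards (identifying the currents and passing to components) are routine given the chosen gamma matrix representation.
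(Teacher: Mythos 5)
Your proposal is correct and follows essentially the same route as the paper: restrict $\omega_j$ to the coincidence set, kill all terms containing either two $dt$'s or both $dz_k$ and $dz_{k+1}$, track the relative sign to obtain the difference $j^{0\cdots 1_k\cdots 0}-j^{0\cdots 1_{k+1}\cdots 0} = \psi^\dagger(\sigma_{3,k}-\sigma_{3,k+1})\psi$, and then expand in the spin basis where $\sigma_{3,\ell}$ acts as $-s_\ell$. The sign bookkeeping you flag as the main obstacle is handled the same way in the paper and your conclusion (a difference, with coefficient $2s_{k+1}$ for $s_k\neq s_{k+1}$) matches.
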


\begin{proof}
We have to evaluate the condition $\left. \omega_j \right|_{\mathscr{C}_{1,t}} = 0$. Note that for $p \in \mathscr{C}_{1,t}$ there exists a $k \in \left\lbrace 1, ... , N-1 \right\rbrace$ such that $p = (t, z_1, ..., t, z_k = z, t, z_{k+1} = z, ..., t, z_N)$.\\
Next, we calculate $\left. \omega_j \right|_{\mathscr{C}_{1,t}}$ according to eq.\@ \eqref{eq:omegaj}, recalling that in this case $x_k^0 = t$ and $x_j^1 = z_j$ as well as $z_k = z_{k+1} = z$.  All terms with more than one index $\mu_l = 1$ in $j^{\mu_1...\mu_l...\mu_N}$ vanish because they contain $dt \wedge dt = 0$. Moreover, the terms with $\mu_k = \mu_{k+1}=0$ do not contribute, either, as they contain $dz \wedge dz = 0$. We are left with terms where all indices $\mu_j$ are equal to zero apart from the $k$-th or the $(k+1)$-th: 
\begin{align}
\omega_j(p) ~&=~ - j^{0 ...(\mu_k = 0) (\mu_{k+1} = 1)... 0} (p) \, dz_1 \wedge \dots \wedge dz_{k-1} \wedge dz \wedge dt \wedge dz_{k+2} \wedge \dots \wedge dz_N \nonumber 
\\ &~~~ -j^{0 \dots (\mu_k = 1) (\mu_{k+1} =  0) \dots 0} (p) \, dz_1 \wedge  \dots \wedge dz_{k-1} \wedge dt \wedge dz \wedge d z_{k+2} \wedge \dots \wedge dz_N \nonumber
\\ &=~ \left( j^{0 \dots 1 0 \dots 0} - j^{0 \dots 0 1 \dots 0} \right)(p)\, dz_1 \wedge \dots \wedge dz_{k-1} \wedge dz \wedge dt \wedge dz_{k+2}\wedge \dots \wedge dz_N 
\end{align}  
This expression vanishes if and only if the bracket is zero. This yields condition \eqref{eq:EquationforPsiThatOmegaVanishes}:
\begin{equation}
 0 ~=~ (j^{0 \dots 1 0 \dots 0} - j^{0 \dots 0 1 \dots 0})(p) ~=~ \psi^{\dagger}(p) \left( \sigma_{3,k} - \sigma_{3, k+1} \right) \psi(p)
 \label{eq:jcond1}
\end{equation}
Written out in components, eq.\@ \eqref{eq:jcond1} reads:
\begin{equation}
 0 ~=~\sum_{(s_1,\dots,s_N) \in \left\lbrace \pm 1 \right\rbrace^N}
  \left( - s_k |\psi_{s_1 \dots s_N}|^2(p) +  s_{k+1} |\psi_{s_1 \dots s_N}|^2(p) \right),
\end{equation}
where $k$ was defined above.\\
Summands with $s_k = s_{k+1}$ cancel out. We are left with
\begin{equation}
0 ~=~  \sum_{\substack{(s_1,\dots,s_N) \in \left\lbrace -, + \right\rbrace^N \\ s_k \neq s_{k+1}}}
 2 s_{k+1} |\psi_{s_1 \dots s_N}|^2(p).
\end{equation}
which yields \eqref{eq:EquationforPsiinComponentsthatOmegavanishes}. \qed
\end{proof}

We now take the following approach in order to find adequate boundary conditions that lead to probability conservation on general space-like hypersurfaces. First, we choose a subclass of \eqref{eq:EquationforPsiinComponentsthatOmegavanishes} which turn out to be Lorentz invariant (see the next section) and to ensure the existence of a solution (see sec.\@ \ref{sec:SectionExistenceandUniquenessSpacelikeConf}). Then we prescribe the condition on the whole set $\mathscr{C}_1$ and show that it is indeed sufficient to ensure condition \eqref{eq:omegajiszero} and therefore probability conservation on general hypersurfaces.\\
It is useful to define the sets
\begin{equation}
\mathscr{C}_{k, k+1} ~:=~ \left\lbrace (t_1, z_1, ..., t_N, z_N) \in \overline{\mathscr{S}}_1 |\, \exists k: 
 t_k = t_{k+1} \wedge z_k = z_{k+1} \right\rbrace.
 \label{eq:ckkplusone}
\end{equation}
One can then write $\mathscr{C}_1 = \bigcup_{k = 1}^{N-1} \mathscr{C}_{k,k+1}$ (see eq.\@ \eqref{eq:c1}).

\begin{lemma} \label{thm:MyBCareProbconserving}
Let  $\varphi^{(k)} \in (-\pi,\pi]$ for $k=1,..., N-1$. Then the boundary conditions
\begin{equation} \label{eq:probconsbdc}
\psi_{s_1 ... s_{k-1} + - s_{k+2}... s_N} ~\hastobe~  e^{i \varphi^{(k)}} \psi_{s_1 ... s_{k-1} - + s_{k+2}... s_N} ~~ \mathrm{on} ~~ \mathscr{C}_{k, k+1},~~k=1,...,N-1
\end{equation}
imply probability conservation on all space-like hypersurfaces in the sense of eq.\@ \eqref{eq:probconservconditiononOmega1}.
\end{lemma}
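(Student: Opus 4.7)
The plan is to apply Lemma~\ref{thm:probcons}: since $\mathscr{C}_1=\bigcup_{k=1}^{N-1}\mathscr{C}_{k,k+1}$, it suffices to show $\omega_j|_{\mathscr{C}_{k,k+1}}=0$ for each $k$. A key point to recognize at the outset is that this is strictly more demanding than the equal-time condition in Lemma~\ref{thm:karltheodor}: on $\mathscr{C}_{k,k+1}$ only the pair $(x_k,x_{k+1})$ is identified, so the other $N-2$ coordinate pairs remain free, and the $N$-form $\omega_j$ is being restricted to a $(2N-2)$-dimensional submanifold rather than to the lower-dimensional equal-time locus treated in Lemma~\ref{thm:karltheodor}.

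First I would compute $\omega_j|_{\mathscr{C}_{k,k+1}}$ directly from \eqref{eq:omegaj}, using the identifications $dt_k=dt_{k+1}$ and $dz_k=dz_{k+1}$ on $\mathscr{C}_{k,k+1}$. Exactly as in the proof of Lemma~\ref{thm:karltheodor}, the factors $dt_k\wedge dt_k$ and $dz_k\wedge dz_k$ annihilate every term with $\mu_k=\mu_{k+1}$, so only the configurations $(\mu_k,\mu_{k+1})\in\{(0,1),(1,0)\}$ survive. Combining the two surviving configurations yields a sum indexed by the free multi-index $(\mu_i)_{i\neq k,k+1}\in\{0,1\}^{N-2}$, whose coefficients are proportional to $j^{\dots 0\,1\dots}-j^{\dots 1\,0\dots}$, with the zero and the one in positions $k,k+1$. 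Since the associated wedge products are linearly independent as $N$-forms on $\mathscr{C}_{k,k+1}$, the vanishing of $\omega_j|_{\mathscr{C}_{k,k+1}}$ is equivalent to the whole family of scalar conditions $j^{\dots 0\,1\dots}=j^{\dots 1\,0\dots}$ holding on $\mathscr{C}_{k,k+1}$ for every choice of the remaining $\mu_i$.

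Next I would rewrite each member of the family in spin components. Using $\gamma^0\gamma^0=\id$ and $\gamma^0\gamma^1=\sigma_3$, each difference becomes $\psi^{\dagger}M(\sigma_{3,k+1}-\sigma_{3,k})\psi$, where $M=\prod_{i\neq k,k+1,\,\mu_i=1}\sigma_{3,i}$ is diagonal in the standard spin basis. Written out, the condition is a signed linear combination of the pointwise differences $|\psi_{\dots+-\dots}|^2-|\psi_{\dots-+\dots}|^2$ (positions $k,k+1$ displayed) summed over all values of the remaining spin indices. Taking absolute values in the prescribed boundary conditions \eqref{eq:probconsbdc} yields $|\psi_{\dots+-\dots}|^2=|\psi_{\dots-+\dots}|^2$ on $\mathscr{C}_{k,k+1}$ for \emph{every} fixed choice of the remaining spins, because $|e^{i\varphi^{(k)}}|=1$. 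Each paired difference therefore vanishes pointwise, and the entire family of conditions collapses.

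I expect the main obstacle to be the step just described: a single complex-valued boundary condition per pair $(k,k+1)$ really does suffice to eliminate all $2^{N-2}$ independent bilinears in the family. What makes it work is that the extra $\sigma_{3,i}$ factors with $i\neq k,k+1$ are diagonal in the spin basis and hence never mix the paired components $\psi_{\dots+-\dots}$ and $\psi_{\dots-+\dots}$; had the boundary conditions instead coupled components with different remote spin indices, one would have to verify each member of the family separately.
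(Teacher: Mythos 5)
Your proposal is correct and follows essentially the same route as the paper: reduce to $\omega_j|_{\mathscr{C}_1}=0$ via Lemma~\ref{thm:probcons}, discard the $\mu_k=\mu_{k+1}$ terms through the degenerate wedge factors, and cancel the $(0,1)$ against the $(1,0)$ terms using $|\psi_{\dots+-\dots}|^2=|\psi_{\dots-+\dots}|^2$ together with the diagonality of the tensor current in the spin components. Your treatment is merely a bit more explicit about the family of conditions indexed by the free indices $(\mu_i)_{i\neq k,k+1}$, which the paper compresses into the remark that the current is diagonal in the components.
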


\begin{proof}
It was shown in lemma \ref{thm:probcons} that equation \eqref{eq:probconservconditiononOmega1} follows if 
\begin{equation}
\left. \omega_j \right|_{\mathscr{C}_1} ~=~ 0 . 
\end{equation}
We show that this equation indeed holds.
Pick a point $p \in \mathscr{C}_1$. Then $\exists k: p \in \mathscr{C}_{k, k+1}$. Condition \eqref{eq:probconsbdc} at this point yields:
\begin{equation}
 |\psi_{s_1... s_{k-1} + -s_{k+2} ...s_N}|^2 (p) ~=~ |\psi_{s_1 ... s_{k-1} - + s_{k+2}...s_N}|^2 (p).
\end{equation}
It follows that 
\begin{equation}
j^{\mu_1 \mu_2 \dots 0 1 \dots \mu_N} (p) ~=~ j^{\mu_1 \mu_2 \dots 1 0 \dots \mu_N} (p)  
\label{eq:jantisymmetriconC}
\end{equation}
because the expression for the current is diagonal in the components. In the formula for $\omega_j(p)$ (eq.\@ \eqref{eq:omegaj}), we can first sum over the indices $\mu_k, \mu_{k+1}$ and afterwards over the rest. Then there are four possibilities in the summands: 
\begin{itemize}
\item $(\mu_k, \mu_{k+1}) = (0,0)$ or $(1,1)$: These do not contribute because the coordinates of the $k$-th and $(k+1)$-th particles are equal, say to $(t,z)$, so either $dz \wedge dz = 0$ or $dt \wedge dt = 0$ appears as a factor in the wedge product.
\item $(\mu_k, \mu_{k+1}) = (0,1)$ or $(1,0)$. One can see that these two factors cancel each other because (abbreviating the other factors in the wedge product by $A$ and $B$)
\begin{flalign} \nonumber & j^{\mu_1 ... (\mu_k =0)(\mu_{k+1}= 1)... \mu_N} A \wedge dt\wedge dz \wedge B  + j^{\mu_1 ...(\mu_k =1)(\mu_{k+1}= 0)... \mu_N} A \wedge dz \wedge dt \wedge B
\\ = & ~\left( j^{\mu_1 \dots 0 1 \dots \mu_N} (p) - j^{\mu_1 \dots 1 0 \dots \mu_N} (p) \right) A \wedge dt \wedge dz \wedge B~ \overset{\eqref{eq:jantisymmetriconC}}{=} ~0 .
\end{flalign}
\end{itemize}
Therefore, the probability-conserving property $\omega_j (p) = 0$ holds.  \qed
\end{proof}

\subsection{Probability conservation implies uniqueness of solutions} \label{sec:probconsunique}
 The notion \eqref{eq:probcons} of probability conservation is very powerful. In this section, we work out the claim in \cite{1d_model} that $\int_{\Sigma^N \cap \Omega} \omega_j$ is a so-called \textit{energy integral} and that therefore probability conservation implies uniqueness of solutions in a suitable sense. 

\begin{definition}
 Let $\Sigma$ be a space-like hypersurface. We define function spaces
\begin{equation}
 \mathcal{H}_\Sigma^{(N)} ~:=~ L^2(\Sigma^N\cap \Omega) \otimes (\C^2)^{\otimes N}.
 \label{eq:hn}
\end{equation}
Furthermore, we call the solution of the IBVP \eqref{eq:model} \textit{weakly unique} iff for every two solutions $\psi, \varphi$ and every space-like hypersurface $\Sigma$ the restrictions $\psi_{|_{\Sigma}}$, $\varphi_{|_{\Sigma}}$ of $\psi, \varphi$ to arguments in $\Sigma^N \cap \Omega$ are equal as elements of $\mathcal{H}_\Sigma^{(N)}$.
 \label{def:hn}
\end{definition}

\begin{theorem} \label{thm:ProbgivesUniquenessTheorem}
Consider the IBVP \eqref{eq:model} with boundary conditions ensuring probability conservation \eqref{eq:probcons} and initial values on $\mathcal{I} = (\Sigma_0)^N \cap \Omega$, i.e.\@ $\psi_{|_\mathcal{I}} \equiv g \in \mathcal{H}_{\Sigma_0}^{(N)}$. Then its solution is weakly unique.
\end{theorem}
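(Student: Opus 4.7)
The plan is a standard energy-integral uniqueness argument, exploiting the positive-definite nature of $\omega_j$ when pulled back to a space-like hypersurface. First, I would reduce to the zero-initial-data case by linearity. Both the multi-time system \eqref{eq:multitimediracspincpts} and the probability-conserving boundary conditions \eqref{eq:probconsbdc} are linear in $\psi$, so if $\psi_1,\psi_2$ are two solutions with the same initial data $g\in\mathcal{H}_{\Sigma_0}^{(N)}$, then $\chi:=\psi_1-\psi_2$ solves the same IBVP with $\chi|_\mathcal{I}=0$ and still satisfies the same probability-conserving boundary conditions.

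Second, I would apply probability conservation to $\chi$. The tensor $j[\chi]^{\mu_1\dots\mu_N}=\overline{\chi}\gamma^{\mu_1}_1\cdots\gamma^{\mu_N}_N\chi$ is conserved, and by hypothesis $\omega_{j[\chi]}|_{\mathscr{C}_1}=0$, so lemma~\ref{thm:probcons} (combined with a standard cutoff/finite-propagation-speed argument using \eqref{eq:generalsolution} to handle the compact-support assumption for general $L^2$ data) gives
\begin{equation*}
\int_{\Sigma^N\cap\mathscr{S}_1}\omega_{j[\chi]} ~=~ \int_{(\Sigma_0)^N\cap\mathscr{S}_1}\omega_{j[\chi]} ~=~ 0
\end{equation*}
for every space-like hypersurface $\Sigma$, the second equality because $\chi$ vanishes on $\mathcal{I}$.

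Third, I would show that $\int_{\Sigma^N\cap\mathscr{S}_1}\omega_{j[\chi]}$ is an \emph{energy integral}, i.e.\@ equivalent to $\|\chi|_\Sigma\|^2_{\mathcal{H}_\Sigma^{(N)}}$. Pulling back $\omega_j$ to $\Sigma^N$ in the representation \eqref{eq:gammamatrices} and using $\gamma^0\gamma^0=\id$, $\gamma^0\gamma^1=\sigma_3$, one obtains an integrand of the form $\chi^{\dagger}\bigotimes_{k=1}^N\!\bigl(n^{(k)}_0\,\id+n^{(k)}_1\,\sigma_{3,k}\bigr)\chi$ against the induced surface measure, where $n^{(k)}$ is the future-directed unit co-normal to $\Sigma$ at the $k$-th factor. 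Since $\Sigma$ is space-like, $n^{(k)}_0>|n^{(k)}_1|$, so each factor is positive-definite with eigenvalues $n^{(k)}_0\pm n^{(k)}_1>0$; the tensor product is therefore positive-definite with eigenvalues bounded below by a constant $c_\Sigma>0$. Hence
\begin{equation*}
0 ~=~ \int_{\Sigma^N\cap\mathscr{S}_1}\omega_{j[\chi]} ~\geq~ c_\Sigma\,\bigl\|\chi|_\Sigma\bigr\|^2_{\mathcal{H}_\Sigma^{(N)}},
\end{equation*}
which forces $\chi|_\Sigma=0$ in $\mathcal{H}_\Sigma^{(N)}$ and is precisely weak uniqueness.

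The main obstacle is the third step: cleanly identifying $\int\omega_j$ with a genuine $L^2$-norm requires an explicit pullback of the $N$-form $\omega_j$ to $\Sigma^N\cap\mathscr{S}_1$ and a careful comparison of the resulting surface measure with the reference measure defining $\mathcal{H}_\Sigma^{(N)}$. The fibrewise positivity is elementary in the chosen representation, but the measure-theoretic bookkeeping near the boundary $\partial\mathscr{S}_1$ needs attention. A secondary technicality is removing the compact-support hypothesis of lemma~\ref{thm:probcons}, which should follow by finite propagation speed and approximation but implicitly restricts the admissible solutions to those for which $\int_{\Sigma^N\cap\mathscr{S}_1}\omega_j$ is finite on every space-like $\Sigma$.
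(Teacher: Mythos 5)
Your proposal is correct and follows essentially the same route as the paper: the paper likewise treats $\int_{\Sigma^N\cap\Omega}\omega_j$ as a norm on $\mathcal{H}_\Sigma^{(N)}$ (asserting positive-definiteness and sesquilinearity of the Dirac tensor current rather than computing the pullback eigenvalues $n_0^{(k)}\pm n_1^{(k)}$ as you do), applies probability conservation to the difference of two solutions, and concludes that its norm vanishes on every space-like hypersurface. Your extra care with the positivity of the pulled-back form and with the compact-support hypothesis only fills in details the paper leaves implicit.
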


\begin{proof}
Consider the expression
\begin{equation}
 \| \phi \|_\Sigma^2 ~:=~ \int_{\Sigma^N \cap \Omega} \omega_j(\phi),
 \label{eq:norm}
\end{equation}
where $\omega_j(\phi)$ is the current form constructed from $\phi$ according to eqs.\@ \eqref{eq:j} and \eqref{eq:omegaj}.  Because the Dirac tensor current $j$ is positive-definite and sesquilinear in the wave function, $\|\cdot\|_\Sigma$ defines a norm on $\mathcal{H}_\Sigma^{(N)}$.\\
Let $\psi, \varphi$ be solutions of the IBVP. Then: $\psi_{|_{\Sigma_0}} \equiv \varphi_{|_{\Sigma_0}} \equiv g \in \mathcal{H}_{\Sigma_0}^{(N)}$ and therefore $\| \psi_{|_{\Sigma_0}} - \varphi_{|_{\Sigma_0}} \|_{\Sigma_0} = 0$. Now let $\Sigma$ be an arbitrary space-like hypersurface. Probability conservation \eqref{eq:probcons} yields:
\begin{equation}
\| \psi_{|_{\Sigma}} - \varphi_{|_{\Sigma}} \|_\Sigma  ~=~ \| \psi_{|_{\Sigma_0}} - \varphi_{|_{\Sigma_0}} \|_{\Sigma_0} ~=~ 0
\end{equation}
and it follows that $\psi_{|_{\Sigma}} \equiv \varphi_{|_{\Sigma}}$ as elements of $\mathcal{H}_\Sigma^{(N)}$.  \qed
\end{proof}

\begin{remark}
 The proof of thm.\@ \ref{thm:ProbgivesUniquenessTheorem} suggests that the map
 \begin{equation}
  U_{\Sigma \rightarrow \Sigma'}: \mathcal{H}_\Sigma^{(N)} \rightarrow \mathcal{H}_{\Sigma'}^{(N)},~~~\psi_{|_{\Sigma}} \mapsto \psi_{|_{\Sigma'}},
  \label{eq:unitaryevolution}
 \end{equation}
 which sends the restriction of a solution $\psi$ of the IBVP to $\Sigma^N \cap \Omega$ to its restriction to $(\Sigma')^N \cap \Omega$, defines a unitary evolution map from one space-like hypersurface to another (see also \cite[sec. 3]{hbd_subsystems}). The analogous view in quantum field theory constitutes the Tomonaga-Schwinger picture \cite{tomonaga,schwinger}.\\
 Note that having in mind a functional-analytic view on time evolution, it might seem natural to take the reverse way to define a multi-time evolution, i.e.\@ first defining the spaces $\mathcal{H}_\Sigma^{(N)}$ and a unitary map $U_{\Sigma \rightarrow \Sigma'}$. However, this is not convincing because then there may exist $\Sigma \neq \Sigma'$ with $\Sigma \cap \Sigma' \neq \emptyset$ such that $\psi_{|_{\Sigma}}(q) \neq \psi_{|_{\Sigma'}}(q)$ even for $q \in \Sigma \cap \Sigma'$ (see \cite{qftmultitime}). Without additional conditions to enforce $\psi_{|_{\Sigma}}(q) = \psi_{|_{\Sigma'}}(q)$ for $q \in \Sigma \cap \Sigma'$, this would mean that one could not regard the multi-time wave functions and the tensor current $j$ as geometrical objects. This, however, may be necessary for a consistent physical interpretation (see e.g.\@ \cite{hbd}).
\end{remark}

\section{Lorentz invariance} \label{sec:li}

The Lorentz invariance of our model requires the invariance of the domain, the multi-time wave equations and the boundary conditions. Apart from the last point, the invariance is already manifest. In this section, we show that also the class \eqref{eq:probconsbdc} of probability-conserving boundary conditions is Lorentz invariant, meaning that the Lorentz-transformed boundary conditions are satisfied as a consequence of the old ones.
\\The transformation behaviour of spinors under a Lorentz transformation $\Lambda: x \mapsto x'$ in the proper Lorentz group $\mathcal{L}_+^\uparrow$ is given by
\begin{equation}
\psi' (x_1,...,x_N) ~=~ S(\Lambda) \otimes \dots \otimes S(\Lambda)  \psi (\Lambda^{-1}x_1,...,\Lambda^{-1}x_N) 
\label{eq:spinortrafo}
\end{equation}
where 
\begin{equation}
 S(\Lambda) ~=~ \exp \left( - \frac{i}{4} \omega_{\mu \nu} \sigma^{\mu \nu} \right) ,~~~ \sigma^{\mu \nu} = \frac{i}{2} \left[ \gamma^{\mu}, \gamma^{\nu} \right].
\end{equation}
Here, $\omega$ is an antisymmetric $(1+d)\times(1+d)$ matrix which characterizes $\Lambda$.\\
For $d = 1$, there is only one free parameter $\beta \in \R$ corresponding to a boost in $z$-direction. One obtains
\begin{equation}
S(\Lambda) ~=~ \left( \begin{array}{cc} 
\cosh \beta + \sinh \beta & 0 \\ 0 & \cosh \beta - \sinh \beta
\end{array} \right).
\end{equation}
As the matrix is diagonal due to our choice of $\gamma$-matrices, it is easy to calculate the $N$-fold tensor product in eq.\@ \eqref{eq:spinortrafo}. In this way, we find that the components of $\psi$ transform as
\begin{equation} \label{TransformationEquationforWaveFunction}
\psi_{s_1 ... s_N}' (x_1,...,x_N) ~=~ \prod_{k=1}^{N} \left( \cosh \beta - s_k \sinh \beta \right) \psi_{s_1 ...s_N}(\Lambda^{-1}x_1,...,\Lambda^{-1}x_N).
\end{equation} 
This means that one obtains a factor of $(\cosh \beta - \sinh \beta)$ for every plus and a factor of $(\cosh \beta + \sinh \beta)$ for every minus in the index $(s_1...s_N)$. Hence, components with an equal number of plus and minus signs transform in the same way.

\begin{example}
We discuss the case $N=3$ in order to motivate the general form \eqref{eq:probconsbdc} of the boundary conditions. Consider a boundary point  $p = (t, z_1, t, z, t, z) \in \mathscr{C}_{1,t}$. We use eq.\@ \eqref{eq:EquationforPsiinComponentsthatOmegavanishes} to compute explicitly what the condition of probability conservation amounts to:
\begin{equation} \label{eq:ExampleEquation}
\omega_j(p)~ =~ 0~~ \Leftrightarrow~~ |\psi_{--+}|^2(p) - |\psi_{-+-}|^2(p) + |\psi_{+-+}|^2(p) -|\psi_{++-}|^2(p) ~=~ 0.
\end{equation}
Now we Lorentz transform this condition according to eq.\@ \eqref{TransformationEquationforWaveFunction} using the identity\\
 $\left( \cosh \beta - \sinh \beta \right) \left( \cosh \beta + \sinh \beta \right) = 1$:
\begin{equation}
0 = \left( \cosh \beta - \sinh \beta \right) \left(|\psi_{--+}|^2 - |\psi_{-+-}|^2 \right)(p') +  \left( \cosh \beta + \sinh \beta \right) \left( |\psi_{+-+}|^2 -|\psi_{++-}|^2 \right) (p'),
\end{equation}
where $p' = (\Lambda^{-1}(t,z_1),\Lambda^{-1}(t,z),\Lambda^{-1}(t,z))$. 
One can see that this equation cannot possibly be Lorentz invariant as a whole. Thus, demanding Lorentz invariance, we have to split up eq.\@ \eqref{eq:ExampleEquation} into two separate conditions relating only components which have the same number of plus and minus signs in their indices, i.e. 
\begin{equation}
|\psi_{-+-}|^2(p) - |\psi_{--+}|^2(p) ~=~ 0,~~~ \ |\psi_{+-+}|^2(p) -|\psi_{++-}|^2(p) ~=~ 0.
\end{equation}
These equations are equivalent to 
\begin{equation}
\psi_{--+}(p) ~=~ e^{i \varphi_-} \psi_{-+-}(p) ,~~~ \ \psi_{++-}(p)  ~=~ e^{i \varphi_+}\psi_{+-+}(p).
\end{equation}
A priori, the phases $\varphi_{\pm}$ could be functions of all particle coordinates. However, it is reasonable to demand invariance under Poincar\'{e} transformations. Then the phases $\varphi_{\pm}$ may only depend on the Minkowski distances of pairs of particles, $(x_i - x_j)^{\mu} (x_i - x_j)_{\mu}$. In the 3-particle case, there is only one such variable, $s^2 := (t_1 - t)^2 - (z_1 - z)^2$. However, the Minkowski distance $s^2$ changes along a multi-time characteristic. However, this would lead to a contradiction because the solution has to be constant along the characteristic (see lemma \ref{thm:generalsolution}). Thus, $\varphi_{\pm}$ must be constant in order for solutions to exist. A further investigations of the existence and uniqueness problem shows that even $\varphi_+ = \varphi_-$ is necessary for solutions to exist.\\
This yields the following general picture. Exchanging $s_k \leftrightarrow s_{k+1}$ in $\psi_{s_1...s_k s_{k+1}...s_N}$ on $\mathscr{C}_{k, k+1}$ only yields a phase factor which must not depend on the other spin indices (but may depend on $k$). These insights motivate the choice of boundary conditions \eqref{eq:probconsbdc}. The following lemma shows that this choice is indeed Lorentz invariant.
\end{example}

\begin{lemma} \label{thm:LemmaLorInvBC}
The probability conserving boundary conditions of lemma \ref{thm:MyBCareProbconserving}, i.e.
\begin{equation} \label{eq:probconsbdc2}
\psi_{s_1 ... s_{k-1} + - s_{k+2}... s_N} ~\hastobe~  e^{i \varphi^{(k)}} \psi_{s_1 ... s_{k-1} - + s_{k+2}... s_N} ~~\mathrm{on} ~~ \mathscr{C}_{k, k+1},~~k = 1,...,N-1
\end{equation}
are Lorentz invariant.
\end{lemma}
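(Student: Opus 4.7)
The plan is to exploit two facts in sequence: (i) the boundary set $\mathscr{C}_{k,k+1}$ is itself Poincar\'e invariant, since the defining relation $t_k=t_{k+1}\wedge z_k=z_{k+1}$ is just the Lorentz-invariant statement $x_k=x_{k+1}$; and (ii) the two components appearing in \eqref{eq:probconsbdc2} pick up exactly the same scalar prefactor under the boost \eqref{TransformationEquationforWaveFunction}, so the phase relation is preserved verbatim.

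First I would observe that under the proper Lorentz transformation $\Lambda$ the set $\mathscr{C}_{k,k+1}$ is mapped to itself, so it suffices to check that if $\psi$ satisfies \eqref{eq:probconsbdc2} on $\mathscr{C}_{k,k+1}$, then the boosted wave function $\psi'$ defined by \eqref{TransformationEquationforWaveFunction} satisfies the same relation (with the same phase $\varphi^{(k)}$) on $\mathscr{C}_{k,k+1}$. Applying \eqref{TransformationEquationforWaveFunction} to $\psi_{s_1\ldots s_{k-1}+-s_{k+2}\ldots s_N}'$ gives a product $\prod_{j=1}^N(\cosh\beta - s_j'\sinh\beta)$ in which the factors for positions $k$ and $k+1$ are $(\cosh\beta - \sinh\beta)(\cosh\beta + \sinh\beta) = 1$. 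The analogous computation for $\psi_{s_1\ldots s_{k-1}-+s_{k+2}\ldots s_N}'$ yields exactly the same prefactor, because the only change is swapping the $k$ and $k{+}1$ factors, whose product is again $1$. Thus both components acquire the common prefactor
\[
P \;:=\; \prod_{\substack{j=1\\ j\neq k,k+1}}^{N} (\cosh\beta - s_j\sinh\beta),
\]
which is nonzero and cancels on the two sides of \eqref{eq:probconsbdc2}.

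Concretely, the original boundary condition evaluated at $\Lambda^{-1}p$ (for $p\in\mathscr{C}_{k,k+1}$, so also $\Lambda^{-1}p\in\mathscr{C}_{k,k+1}$) gives
\[
\psi_{s_1\ldots +-\ldots s_N}(\Lambda^{-1}p) \;=\; e^{i\varphi^{(k)}}\,\psi_{s_1\ldots -+\ldots s_N}(\Lambda^{-1}p),
\]
and multiplying both sides by $P$ yields $\psi'_{s_1\ldots +-\ldots s_N}(p) = e^{i\varphi^{(k)}}\psi'_{s_1\ldots -+\ldots s_N}(p)$, which is precisely \eqref{eq:probconsbdc2} for $\psi'$ at $p\in\mathscr{C}_{k,k+1}$.

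There is essentially no obstacle here; the heart of the argument is the simple algebraic identity $(\cosh\beta-\sinh\beta)(\cosh\beta+\sinh\beta)=1$ combined with the diagonality of $S(\Lambda)$ in our chosen representation of the $\gamma$-matrices. The only point worth being careful about is that the phase $\varphi^{(k)}$ must be a true constant (not a function of the spacetime point), as motivated in the preceding example; were it allowed to depend on Minkowski distances, the invariance argument would have to address how those distances transform, but since $\varphi^{(k)}$ is constant this issue does not arise. Translation invariance is trivial in the same way, so Poincar\'e invariance of \eqref{eq:probconsbdc2} follows. \qed
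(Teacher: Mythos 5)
Your proof is correct and follows essentially the same route as the paper's (much terser) argument: the transformation law \eqref{TransformationEquationforWaveFunction} gives both components in \eqref{eq:probconsbdc2} the identical nonzero scalar prefactor (since swapping the $k$ and $(k{+}1)$ factors leaves the product unchanged), and the sets $\mathscr{C}_{k,k+1}$ are themselves Lorentz invariant. Your write-up simply makes explicit the cancellation that the paper leaves implicit.
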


\begin{proof}
According to eq.\@ \eqref{TransformationEquationforWaveFunction}, eq.\@ \eqref{eq:probconsbdc2} has the same form in every Lorentz frame. Besides, the sets $\mathscr{C}_{k,k+1}$ on which the condition is prescribed are Lorentz invariant.  \qed
\end{proof}

\begin{remark} One may ask if there are other possible choices of boundary conditions which lead to $\omega_j=0$ at the boundary and are Lorentz invariant. The example shows that for $N=3$ we have already found the only one. For $N\geq4$ there may exist more complicated boundary conditions with the desired properties. However, aiming at a model valid for all $N \geq 2$, we do not further pursue this question here.
\end{remark}

\section{Existence and uniqueness of solutions} \label{sec:SectionExistenceandUniquenessSpacelikeConf}

We now come to the main result of the paper: the theorem on the existence and uniqueness of solutions for the boundary conditions discussed so far (thm.\@ \ref{thm:THETheorem}). Furthermore, we find an explicit formula for the unique solution of a given IBVP within the class defined by eq.\@ \eqref{eq:model} with $\Omega = \mathscr{S}_1$ and boundary conditions \eqref{eq:probconsbdc}.\\
We start with providing some intuition about the behaviour of the solution for $N = 3$. The main idea is to make use of the fact that the components of the solution have to be constant along the multi-time characteristics \cite{1d_model}.

\begin{example}
For $N = 3$ the wave function has $2^3 = 8$ components. According to lemma \ref{thm:generalsolution}, these are constant along their respective multi-time characteristics. We visualize the multi-time characteristics as follows (see fig.\@ \ref{fig:FirstPicture}). One can see from eq.\@ \eqref{eq:multitimecharacteristic} that the multi-time characteristics are the Cartesian product of $N=3$ lines. These lines are plotted in the same space-time diagram. Any combination of three points on the different lines constitutes an element of the respective multi-time characteristic. The slopes of the various lines characterize the associated component $\psi_{s_1...s_N}$. More precisely, a line with nevative (positive) slope for particle $k$ is associated with the appearance of $s_k = +1$ ($s_k = -1$) in the index of $\psi$.\\
Fig.\@ \ref{fig:FirstPicture} shows a multi-time characteristic $S_{+++}(c_1,c_2,c_3)$ for the component $\psi_{+++}$ where the $c_k$ are defined by a certain point $p=(A,B,C) \in \mathscr{S}_1$ (see eq.\@ \eqref{eq:characteristicvalues}). $\psi_{+++}$ is determined at $p$ by initial data because the whole characteristic $S_{+++}(c_1,c_2,c_3)$ is contained in $\mathscr{S}_1$. This can be seen from the fact that every three points on different lines are space-like related. Besides, the value $\psi(p)$ is determined \textit{uniquely} by initial data as there exists a unique intersection point of $S_{+++}(c_1,c_2,c_3)$ with the surface $t_1 = t_2 = t_3 = 0$, given by $(0, c_1, 0, c_2, 0, c_3)$. Thus, we obtain
\begin{equation}
\psi_{+++} (t_1, z_1, t_2, z_2, t_3, z_3) ~=~ g_{+++} (0, c_1, 0, c_2, 0, c_3),
\end{equation}
where $g_{+++}$ are the initial values for $\psi_{+++}$.

\begin{figure}[h]
\begin{center}
\includegraphics[scale=0.8]{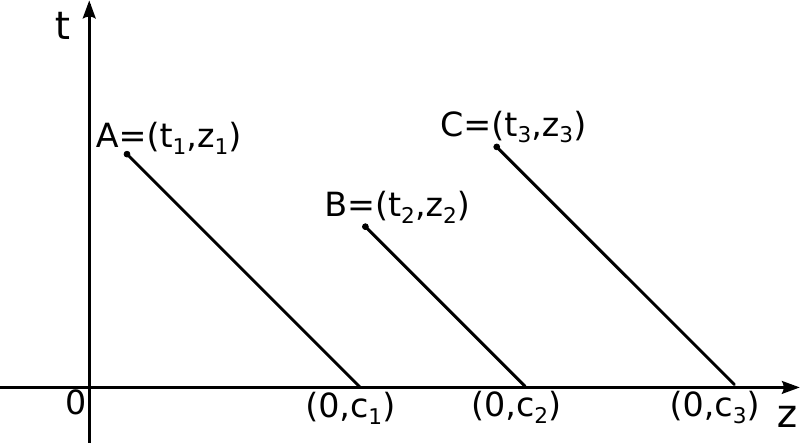}
\caption{\small A multi-time characteristic $S_{+++}(c_1,c_2,c_3)$ for the component $\psi_{+++}$. $S_{+++}(c_1,c_2,c_3)$ is the Cartesian product of three lines which are plotted in the same space-time diagram. Every triple of points on different lines, e.g.\@ $(A,B,C)$, is contained in $\mathscr{S}_1$.}
\label{fig:FirstPicture}
\end{center}
\end{figure}

For a component $\psi_{s_1 s_2 s_3}$ containing plus as well as minus signs in the index, for example $\psi_{+-+}$, the situation is different (see fig.\@ \ref{fig:SecondPicture}). One can see that intersections of the lines defining a characteristic $S_{+-+}$ do occur in the diagram.
\begin{figure}[h] 
\begin{center}
\includegraphics[scale=0.8]{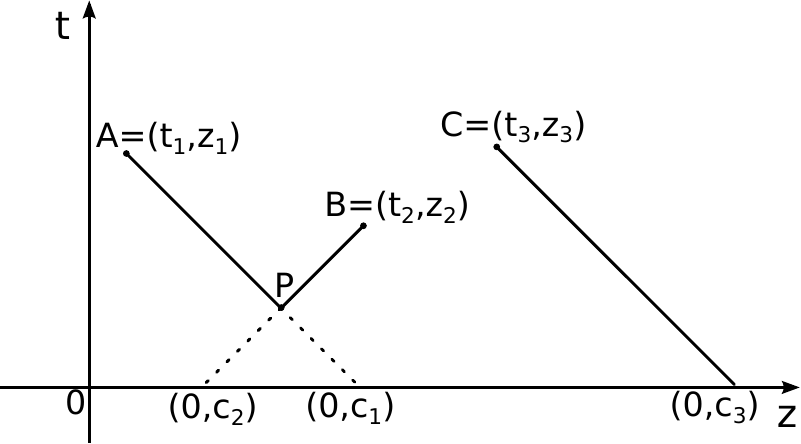}
\caption{\small A multi-time characteristic $S_{+-+}$ for $\psi_{+-+}$, depicted for the same configuration as in fig.\@ \ref{fig:FirstPicture}. One cannot trace back the lines to the initial data surface $\mathcal{I}$ because one leaves $\mathscr{S}_1$ at the point $P$. Instead one has to first make use of the boundary conditions and can trace back the multi-time characteristic for $\psi_{-++}$ which corresponds to the same lines but with particle labels 1 and 2 exchanged.
}
\label{fig:SecondPicture}
\end{center}
\end{figure}
\\ When an intersection occurs, the multi-time characteristic leaves $\mathscr{S}_1$. Therefore, a situation like in fig.\@ \ref{fig:SecondPicture} can occur: tracing back the multi-time characteristic to the initial data surface, one leaves the domain. Thus, $\psi_{+-+}(A,B,C)$ is not defined solely by initial values. To obtain the value of  $\psi_{+-+}(A,B,C)$, we first realize $\psi_{+-+} (A, B, C) = \psi_{+-+} (P, P, C)$. Then we employ the boundary conditions to obtain $ \psi_{+-+} (A, B, C) = \psi_{+-+} (P, P, C)  = e^{i \varphi^{(1)}}\psi_{-++} (P, P, C)$. The component $\psi_{-++}$ is now determined at $(P,P,C)$ by initial data in a similar way as before, i.e.\@ $\psi_{-++} (P, P, C) = g_{-++}(0, c_2, 0, c_1, 0, c_3)$. Summarizing the relations, we obtain:
\begin{equation}
\psi_{+-+} (t_1, z_1, t_2, z_2, t_3, z_3) ~=~  e^{i \varphi^{(1)}} g_{-++} (0, c_2, 0, c_1, 0, c_3).
\end{equation}  
Pictorially speaking, this amounts to exchanging particle labels and picking up a phase while leaving the domain on the way back to the initial data surface.

The considerations above hint at a general idea: it is possible to obtain an explicit formula for the solutions of the IBVP (see eq.\@ \eqref{eq:Formulaforsolutions}) by a process of successively tracing back components to collisions, using the boundary conditions to switch the component, tracing back to the next collision and finally arriving at the initial data. In this way, one can also determine values of components with multiple intersections of the lines constituting the multi-time characteristic, like in fig.\@ \ref{fig:ThirdPicture}. This motivates the theorem below.

\begin{figure}[h] 
\begin{center}
\includegraphics[scale=0.8]{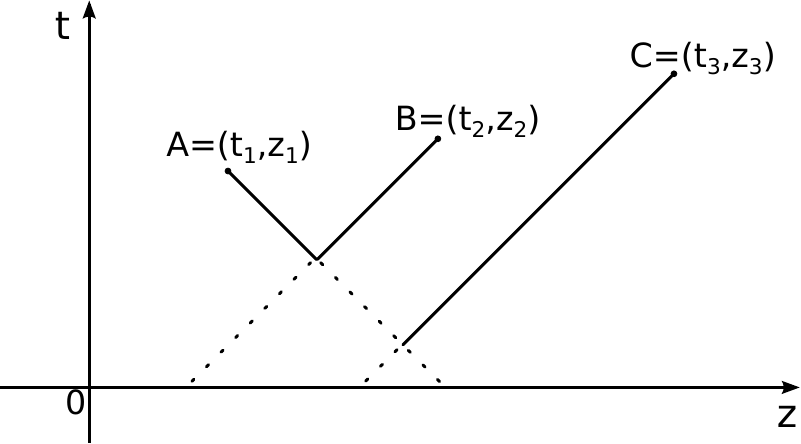}
\caption{\small A multi-time characteristic with several intersections for the component $\psi_{+--}$.}
\label{fig:ThirdPicture}
\end{center}
\end{figure}    

\end{example}

\begin{theorem}
\label{thm:THETheorem}
Let $m \in \mathbb{N}$ and choose initial data $g_j \in  C^m \left( \mathcal{I} \cap  \overline{\mathscr{S}}_1 , \mathbb{C}  \right) \ \forall j = 1, ... , 2^N$ such that they also satisfy the boundary conditions, i.e.
\begin{align} \label{eq:BoundaryValuesforInitialConditions}
 & g_{s_1 ... s_{k-1}+- s_{k+2} ... s_N}  ~=~   e^{i \varphi^{(k)}} g_{s_1 ... s_{k-1} -+ s_{k+2} ... s_N} ~~ {\rm on}~ \mathcal{I} \cap \mathscr{C}_{k,k+1}, ~~k=1,...,N-1
 \end{align}
and let this transition be $C^m$.\\
 Then there exists a unique solution $\psi \in C^m\left( \overline{\mathscr{S}}_1 , (\C^2)^{\otimes N}  \right)$ of the IBVP \eqref{eq:model} with $\Omega = \mathscr{S}_1$, boundary conditions \eqref{eq:probconsbdc} and initial data
 \begin{equation}
  \psi_{s_1...s_N}(0,z_1,...,0,z_N)~=~ g_{s_1...s_N}(z_1,...,s_N),~~z_1 \leq \dots \leq z_N.
  \label{eq:initialvalues}
 \end{equation}
 If all characteristic values $c_k = z_k + s_k t_k$ are different, the components of the solution are explicitly given by
\begin{equation} \label{eq:Formulaforsolutions}
\psi_{s_1 ... s_N} (t_1, z_1, ..., t_N, z_N) ~=~ e^{i \phi^{\pi}_{s_1 ... s_N} } g_ {s_{\pi(1)} ... s_{\pi(N)}} \left( c_{\pi(1)}, ... , c_{\pi(N)} \right),
\end{equation} 
where $\pi$ is the permutation with
\begin{equation}
c_{\pi(1)} < \dots < c_{\pi(N)}.
 \label{eq:ordering}
\end{equation} 
$\phi^{\pi}_{s_1... s_N}$ is the phase which is uniquely determined by the $\varphi^{(k)}$ of eq.\@ \eqref{eq:probconsbdc} via the definition below.\\
If some of the $c_k$ are equal, $\psi_{s_1... s_N}$ is given by the continuation of eq.\@ \eqref{eq:Formulaforsolutions}.\\ 
In addition, the model ensures probability conservation on general space-like hypersurfaces and is Lorentz invariant.
\end{theorem}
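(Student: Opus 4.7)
The plan is to split the proof into an existence part via explicit construction and a uniqueness part via the general probability-conservation argument already established. More precisely, I would \emph{define} $\psi$ by the right-hand side of \eqref{eq:Formulaforsolutions} on the open subset of $\overline{\mathscr{S}}_1$ where all characteristic values $c_k = z_k + s_k t_k$ are distinct, extend continuously to the rest, and verify that the result satisfies the IBVP in $C^m$. Uniqueness is then a direct consequence of Theorem \ref{thm:ProbgivesUniquenessTheorem} together with Lemma \ref{thm:MyBCareProbconserving}, and the final assertions about probability conservation and Lorentz invariance follow at once from Lemmas \ref{thm:MyBCareProbconserving} and \ref{thm:LemmaLorInvBC}.

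The phase $\phi^\pi_{s_1 \dots s_N}$ would be built inductively via the tracing-back picture illustrated in the examples: decompose $\pi$ into a product of adjacent transpositions (for instance via bubble sort on $(c_1,\dots,c_N)$) and accumulate a factor $\varphi^{(k)}$ for each transposition $(k, k+1)$ whose two affected spins are unequal at that stage, with equal-spin swaps contributing nothing since the boundary condition \eqref{eq:probconsbdc2} is then trivial. With this definition the multi-time Dirac system \eqref{eq:multitimediracspincpts} is automatic, because $(\partial_{t_k} - s_k \partial_{z_k})$ annihilates any function of $(c_1, \dots, c_N)$; the initial conditions \eqref{eq:initialvalues} hold because on $\mathcal{I}$ with $z_1 \le \dots \le z_N$ one has $c_k = z_k$ already ordered, so $\pi = \mathrm{id}$ and the phase vanishes; and the boundary conditions \eqref{eq:probconsbdc} hold because crossing $\mathscr{C}_{k,k+1}$ exactly swaps $c_k$ with $c_{k+1}$ and composes $\pi$ on the right with the adjacent transposition $(k,k+1)$, which by construction of the phase multiplies $\psi_{s_1 \dots s_N}$ by precisely one factor of $e^{i\varphi^{(k)}}$.

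The hard part is to show that this prescription is \emph{well-defined} and attains the asserted $C^m$ regularity. Well-definedness requires that $\phi^\pi_{s_1 \dots s_N}$ not depend on the chosen decomposition of $\pi$, which by the standard presentation of $S^N$ reduces to verifying the Coxeter relations; only the braid relation is nontrivial, and after careful bookkeeping of which swaps pick up phases it produces a consistency identity that I expect to be equivalent to the compatibility relations \eqref{eq:BoundaryValuesforInitialConditions} imposed on the initial data at higher-codimension coincidence loci of $\mathcal{I}$. Next, $C^m$ regularity across the lower-dimensional strata where the sorting permutation of the $c_k$ jumps is guaranteed by the $C^m$-transition hypothesis on the initial data: the compatibility conditions ensure that the formula glues with all derivatives up to order $m$ across such strata, and this regularity propagates throughout $\overline{\mathscr{S}}_1$ because each component is constant along its multi-time characteristic. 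The non-generic locus of coincident $c_k$ is then handled simply by continuous (in fact $C^m$) extension from the generic open set.

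Finally, for uniqueness, any other $C^m$ solution $\varphi$ of the IBVP agrees with $\psi$ on $\mathcal{I}$ and satisfies the same probability-conserving boundary conditions, so Theorem \ref{thm:ProbgivesUniquenessTheorem} (applicable by Lemma \ref{thm:MyBCareProbconserving}) yields $\psi|_\Sigma = \varphi|_\Sigma$ in $\mathcal{H}_\Sigma^{(N)}$ for every space-like hypersurface $\Sigma$; continuity of both functions upgrades this $L^2$-equality to pointwise equality on $\overline{\mathscr{S}}_1$. The delicate step is therefore the construction part, whose crux is the phase-consistency/regularity argument of the third paragraph; once that is in place, the rest of the theorem assembles from results proved earlier in the paper.
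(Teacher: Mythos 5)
Your overall architecture coincides with the paper's: define $\psi$ by \eqref{eq:Formulaforsolutions}, check the four properties (regularity, Dirac equations via constancy along multi-time characteristics, initial conditions via $\pi=\mathrm{id}$ on $\mathcal{I}$, boundary conditions via composing $\pi$ with $\tau_k$), and obtain uniqueness from Theorem \ref{thm:ProbgivesUniquenessTheorem} plus Lemma \ref{thm:MyBCareProbconserving}, upgrading $L^2$-equality to pointwise equality by continuity. You also correctly locate the crux in the well-definedness of the phases $\phi^\pi_{s_1\dots s_N}$.

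However, your proposed resolution of that crux is where the gap lies. You plan to verify the Coxeter presentation of $S^N$ and expect the braid relation to yield ``a consistency identity equivalent to \eqref{eq:BoundaryValuesforInitialConditions}.'' This cannot work: \eqref{eq:BoundaryValuesforInitialConditions} is a constraint on the initial data $g$, whereas well-definedness of $\phi^\pi$ is a purely combinatorial statement about the constants $\varphi^{(1)},\dots,\varphi^{(N-1)}$, which are arbitrary and independent in the theorem. If the braid relation $\tau_k\tau_{k+1}\tau_k=\tau_{k+1}\tau_k\tau_{k+1}$ actually had to be checked, the two decompositions would accumulate phases like $\pm\varphi^{(k)}\pm\varphi^{(k+1)}\pm\varphi^{(k)}$ versus $\pm\varphi^{(k+1)}\pm\varphi^{(k)}\pm\varphi^{(k+1)}$, which do not agree for generic choices; the theorem would then be false as stated. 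The missing idea (Lemma \ref{thm:phiwelldefined} in the paper) is a geometric exclusion: if $(a,b)$ with $a<b$ is a collision, i.e.\@ $c_a>c_b$, then space-like separation forces either $s_a=+1$, $s_b=-1$ with $t_a,t_b>0$, or the mirror case with negative times. Consequently two overlapping adjacent pairs $(s,s+1)$ and $(s+1,s+2)$ can never both be collisions of the sorting permutation on $\overline{\mathscr{S}}_1$, so any two ``last transpositions'' in a decomposition satisfy $|s-k|\geq 2$ and commute; the braid situation simply never arises for the permutations that actually occur, and well-definedness follows by induction on the number of collisions using only the commuting case. Relatedly, your accumulation rule (``a factor $\varphi^{(k)}$ for each unequal-spin swap'') omits the sign $s_k$ in $\phi^{\pi}_{s_1\dots s_N}=\phi^{\sigma}_{\dots s_{k+1}s_k\dots}+s_k\varphi^{(k)}$, which is needed because \eqref{eq:probconsbdc} is directional: passing from $-+$ to $+-$ contributes $+\varphi^{(k)}$, the reverse $-\varphi^{(k)}$. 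Without these two repairs the construction is not well-posed, so the existence half of your argument does not yet go through.
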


\begin{definition}
A pair $(k,l) \in \left\lbrace 1, ... , N \right\rbrace^2$ with $k<l$ is said to be a \textit{collision} of a transposition $\pi \in S_N$ iff $\pi(k) > \pi(l)$. 
\end{definition}

\begin{definition}
The phases $\phi^{\pi}_{s_1 ... s_N}$ appearing in thm.\@ \ref{thm:THETheorem} are given by the conditions:
\begin{enumerate}
 \item $\phi^{\rm id}_{s_1 ... s_N} = 0$,
 \item Let $\tau_k$ be the transposition of $k$ and $k+1$. If $\pi$ can be decomposed as $\pi = \tau_k \circ \sigma$ where $\sigma$ is a permutation with fewer collisions than $\pi$, then
\begin{equation}
\label{eq:PhaseCondition}
\phi^{\pi}_{s_1 ... s_N} ~=~ \phi^{\sigma}_{s_1 ... s_{k+1} s_k ... s_N} + s_k  \varphi^{(k)}.
\end{equation}
\end{enumerate}
\end{definition}

\begin{lemma}
 The phases $\phi^{\pi}_{s_1 ...s_N}$ exist and are uniquely determined.
 \label{thm:phiwelldefined}
\end{lemma}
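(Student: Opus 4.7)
The plan is to prove lemma~\ref{thm:phiwelldefined} by strong induction on the number $n(\pi)$ of collisions of $\pi$. The base case $n(\pi) = 0$ gives $\pi = \mathrm{id}$, and clause~1 of the definition fixes $\phi^{\mathrm{id}}_{s_1\dots s_N} = 0$ uniquely. For the inductive step, assume $\phi^{\sigma}$ is well-defined and unique for every $\sigma$ with $n(\sigma) < n(\pi)$.

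\textbf{Existence.} For any $\pi \neq \mathrm{id}$ there exists at least one $k \in \{1,\ldots,N-1\}$ with $\pi^{-1}(k) > \pi^{-1}(k+1)$, since otherwise the sequence $\pi^{-1}(1),\ldots,\pi^{-1}(N)$ would be strictly increasing and $\pi$ would be the identity. For such $k$, the permutation $\sigma := \tau_k \circ \pi$ has exactly $n(\pi)-1$ collisions (only the pair of positions carrying the values $k$ and $k+1$ changes its inversion status), so $\pi = \tau_k \circ \sigma$ is a valid reduction and \eqref{eq:PhaseCondition} together with the inductive hypothesis assigns a value to $\phi^{\pi}_{s_1 \dots s_N}$.

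\textbf{Uniqueness.} Suppose $\pi$ admits two reductions $\pi = \tau_k \circ \sigma = \tau_l \circ \sigma'$ with $k \neq l$ and both $\sigma, \sigma'$ having $n(\pi)-1$ collisions. If $|k - l| \geq 2$ (commutation case), then $\tau_k$ and $\tau_l$ commute and $\sigma'' := \tau_k \sigma' = \tau_l \sigma$ is a common further reduction with $n(\pi)-2$ collisions. Unfolding \eqref{eq:PhaseCondition} once more along both paths and invoking the inductive hypothesis on $\phi^{\sigma''}$, both routes yield $\phi^{\sigma''}$ evaluated at the same spin pattern, plus the summands $s_k \varphi^{(k)} + s_l \varphi^{(l)}$, since the spin-index swaps $T_k$ and $T_l$ act on disjoint entries of $(s_1,\ldots,s_N)$. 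If instead $|k - l| = 1$ (braid case), say $l = k+1$, both adjacent reductions being valid forces $\pi^{-1}(k) > \pi^{-1}(k+1) > \pi^{-1}(k+2)$. For the $\pi$'s actually relevant to thm.~\ref{thm:THETheorem}---the sorting permutations of characteristic values $c_j = z_j + s_j t_j$ at a space-like configuration with $z_1 \leq \cdots \leq z_N$---this translates to $c_k > c_{k+1} > c_{k+2}$. But the space-like inequality $|t_i - t_j| < z_j - z_i$ for $i < j$, combined with $c_j - c_i = (z_j - z_i) + s_j t_j - s_i t_i$, forces $c_i < c_j$ whenever $s_i = s_j$; three strictly decreasing $c$-values would then require $s_k, s_{k+1}, s_{k+2}$ to be pairwise distinct, which is impossible with only two spin values. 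This sub-case therefore never arises.

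\textbf{Main obstacle.} The delicate point is the braid case: algebraically, the identity $\tau_k \tau_{k+1} \tau_k = \tau_{k+1} \tau_k \tau_{k+1}$ would impose the consistency relation $\varphi^{(k)} = \varphi^{(k+1)}$, which need not hold for generic free phases in \eqref{eq:probconsbdc}. The resolution is geometric rather than algebraic---the interplay of the space-like constraint with the binary spin index eliminates the three-level descent that would produce the ambiguity, so the recursion remains consistent on all permutations relevant to the theorem.
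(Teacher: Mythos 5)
Your proof is correct and follows essentially the same route as the paper: induction on the number of collisions, with uniqueness reduced to the commuting case $|k-l|\geq 2$ after excluding overlapping adjacent collisions via the space-like constraint. Your exclusion of the braid case (showing $s_i=s_j$ forces $c_i<c_j$, so three mutually inverted characteristic values would need three pairwise distinct spin signs) is a slightly more streamlined derivation of the same geometric fact that the paper establishes in its auxiliary Claim, and your explicit identification of the braid relation $\varphi^{(k)}=\varphi^{(k+1)}$ as the obstruction being avoided is a nice clarification of why that step is essential.
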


\begin{proof}
We proceed via induction over the number of collisions.\\
\textit{Induction start:} If $\pi$ has no collision, $\pi = \mathrm{id}$. Thus, the phase $\phi^{\rm id}_{s_1 ... s_N}$ is determined uniquely by the first condition in the definition. If $\pi$ has exactly one collision, then it is just a transposition of neighbouring elements, so there exists some $k$ with $\pi = \tau_k = \tau_k \circ \mathrm{id}$ and the phase is uniquely determined by \eqref{eq:PhaseCondition} as $\phi^{\pi}_{s_1 ... s_N} = s_k \varphi^{(k)}$.\\
\textit{Induction step:} Assume that all phases $\phi^{\pi' }_{s_1 ... s_N}$ for permutations with $n\geq 1$ collisions are uniquely determined and let $\pi$ have $n+1$ collisions.
 It is known from the general theory of permutations that there exists at least one permutation $\sigma$ with $n$ collisions and a neighbouring transposition $\tau_s$ such that $\pi = \tau_s \circ \sigma$. However, it may be possible to decompose $\pi$ in two different ways:
\begin{equation}
\pi ~=~ \tau_s \circ \sigma ~=~ \tau_k \circ \kappa
\label{eq:twodifferentways}
\end{equation}
where $s, k \in \left\lbrace 1, \dots , N \right\rbrace$, $s \neq k$ and $\sigma, \kappa$ are permutations with at least $n$ collisions. In order for these two permutations to have one collision less than $\pi$, we see that $(k, k+1)$ and $(s, s+1)$ must be collisions of $\pi$.\\
To show that despite the different ways of decomposition, the corresponding phases are uniquely defined, we make use of the fact that the phases need only be defined for a certain type of permutation. To characterize them, we prove an auxiliary claim: in the above situation, $\tau_s$ commutes with $\tau_k$ because $|s - k| \neq 1$.
\\ \par \begingroup
\leftskip=15pt \textit{Claim:} Let $\psi_{s_1 \dots s_N}$ and $(t_1, z_1, \dots , t_N, z_N) \in \mathscr{S}_1$ such that there is a collision, i.e.\@ a pair $(a, b)$ with $a<b$ and $c_a > c_b$. Then one of the following two possibilities holds: 
 \begin{equation}
\left\lbrace
 \begin{array}{c}
 \mathrm{either} \ s_a = +1 \ \wedge\  s_b = -1 \ \wedge \ t_a>0 \ \wedge \ t_b>0  
 \\ \hspace{0.35cm} \mathrm{or} \hspace{0.35cm} s_a = -1 \ \wedge \ s_b = +1 \ \wedge \ t_a<0 \ \wedge \ t_b<0 
 \end{array} \right.
\end{equation}  
\\ \textit{Proof of the Claim:}
We know that $a<b$, $c_a > c_b$ and $z_a < z_b$. We show that  $s_a = +1$ implies $s_b = -1 \ \wedge \ t_a>0 \ \wedge \ t_b>0$; the second case follows analogously. \\
If $s_a = +1$, then 
 \begin{flalign} 
 c_a & ~>~  c_b \nonumber
 \\ \Leftrightarrow~~~ z_a + t_a & ~>~  z_b +s_b t_b \nonumber
 \\ \Leftrightarrow~~~  t_a - (s_b t_b) & ~>~ z_b - z_a ~=~ | z_b - z_a |.
 \end{flalign}
 If now $s_b = +1$, this would be a contradiction to the points $(t_a,z_a)$ and $(t_b,z_b)$ being space-like separated. Hence, $s_b = -1$, so we have:
 \begin{equation}
 |z_b - z_a| ~<~ t_a + t_b ~=~ |t_a + t_b|.
 \end{equation}
 This implies that $t_a$ and $t_b$ cannot both be negative. So assume that one of them is negative, w.l.o.g. $t_a>0, t_b<0$. But then $|t_a - t_b| > |t_a + t_b|$ and
 \begin{equation}
 |z_k - z_a| ~<~ |t_a + t_b| ~<~ |t_a -t_b|,
 \end{equation}
  which also is a contradiction to the points being space-like. Thus, one must have $t_a>0, t_b>0$, which proves the claim. 
\par 
\endgroup
~\\
Because of the specific sign combinations that allow for collisions, the claim shows that if $(s, s+1)$ is a collision, neither $(s-1, s)$ nor $(s+1, s+2)$ can be one. Therefore, $|k - s| \geq 2$ and $\tau_k, \tau_s$ commute.\\
We use the commutability of $\tau_k$ and $\tau_s$ to define a third permutation
\begin{equation}
\rho ~:=~ \tau_k \circ \tau_s \circ \pi ~=~ \tau_k \circ \sigma ~=~ \tau_s \circ \kappa,
\end{equation}
which by construction has $n-1$ collisions, i.e.\@ one less than $\sigma$ and $\kappa$. This means the seemingly different representations of $\phi^{\pi}_{s_1 ... s_N}$,
\begin{align}
\phi^{\pi}_{s_1 \dots s_N} & ~=~  \phi^{\sigma}_{s_1 \dots s_{s+1} s_s \dots s_N} + s_s  \varphi^{(s)} ~~~ \mathrm{and}  \nonumber
\\ \phi^{\pi}_{s_1 \dots s_N} & ~=~  \phi^{\kappa}_{s_1 \dots s_{k+1} s_k \dots s_N} + s_k  \varphi^{(k)} 
\end{align} 
are in fact equal. This can be seen from the fact that the different ways of decomposing $\pi$ via eq.\@ \eqref{eq:twodifferentways} yield (using \eqref{eq:PhaseCondition}):
\begin{align}
 \phi^{\pi}_{s_1 ... s_N} ~=~\phi^{\sigma}_{s_1 \dots s_{s+1} s_s \dots s_N} + s_s  \varphi^{(s)} & ~=~ \phi^{\rho}_{s_1 \dots s_{k+1} s_k \dots s_{s+1} s_s \dots s_N} + s_k \varphi^{(k)} +s_s  \varphi^{(s)} \nonumber
 \\ & ~=~ \phi^{\kappa}_{s_1 \dots s_{k+1} s_k \dots s_N} + s_k \varphi^{(k)} ~=~\phi^{\pi}_{s_1 ... s_N}.
\end{align}
This finishes the proof of uniqueness of the phases because by the induction hypothesis, the phases associated with $\rho$, $\sigma$ and $\kappa$ exist and are uniquely determined.\qed 
\end{proof}

\begin{proofT}
The points of Lorentz invariance and probability conservation are clear from lemma \ref{thm:LemmaLorInvBC} and lemma \ref{thm:MyBCareProbconserving}, respectively. Furthermore, we already know that the uniqueness of solutions in a weak sense follows from probability conservation by virtue of thm.\@ \ref{thm:ProbgivesUniquenessTheorem}. If the function defined by eq.\@ \eqref{eq:Formulaforsolutions} is indeed $m$ times continuously differentiable, it follows from continuity that it is also unique as a $C^m$-function.\\
Thus, it only remains to show that the function given by \eqref{eq:Formulaforsolutions} is indeed a classical solution of the IBVP. In order to prove this, the following four points have to be verified:
\begin{enumerate}
\item \textit{Differentiability:} We need to prove that $\psi \in C^m \left( \mathscr{S}_1 , (\mathbb{C}^2)^{\otimes N} \right)$. As the initial values satisfy $g_j \in  C^m \left( \mathcal{I} , \mathbb{C}   \right) \ \forall j = 1, ... , 2^N$, this  property is inherited by $\psi_j$ via the characteristics. To see this, note that eq.\@ \eqref{eq:Formulaforsolutions} just makes use of a translation of the initial values along straight lines in the $(t_k,z_k)$ spaces.\\
However, we need to consider those points separately where the permutation $\pi$ changes. This exactly happens when at least two of the characteristic values $c_j$ are equal. But then the $C^m$-property of $\psi$ is assured by the requirement that the initial values must satisfy the boundary conditions (eq.\@ \eqref{eq:BoundaryValuesforInitialConditions}) and that the transition shall be $C^m$.
\item The function defined by eq.\@ \textit{\eqref{eq:Formulaforsolutions} solves the system of Dirac equations in $\mathscr{S}_1$:} This follows from lemma \ref{thm:generalsolution} because the components of the solution are indeed constant along the respective multi-time characteristics and only depend on the characteristic values $c_k$.
\item \textit{The initial conditions \eqref{eq:initialvalues} are satisfied:} At a point $(0, z_1, 0, z_2, ... , 0, z_N) \in \mathcal{I} \cap \overline{\mathscr{S}}_1$, we have $c_k = z_k \ \forall k$ and thus $c_{\pi(1)} \leq c_{\pi(2)} \leq \dots \leq c_{\pi(N)}$ is fulfilled for $\pi = {\rm id}$. Therefore, formula \eqref{eq:Formulaforsolutions} reduces to
\begin{equation}
\psi_{s_1 ... s_N} (0, z_1, ..., 0, z_N) ~=~ g_ {s_{1} ... s_{N}} \left( c_{1}, ..., c_{N} \right)
\end{equation}
which is equivalent to \eqref{eq:initialvalues}.
\item \textit{The boundary conditions \eqref{eq:probconsbdc} are satisfied:} 
Let $k \in \left\lbrace 1, ..., N-1 \right\rbrace$ and $(t_1, z_1, ..., t_k = t, z_k = z, t_{k+1} = t, z_{k+1} = z, ..., t_N, z_N) \in \mathscr{C}_{k, k+1}$. We consider two components of $\psi$ where only the $k$-th and $(k+1)$-th sign is exchanged, or more formally: Let $(s_1, \dots s_N)$, $(\tilde{s}_1, \dots \tilde{s}_N) \in \left\lbrace -1, +1 \right\rbrace^N$ with $s_l = \tilde{s}_l \ \forall l \notin \left\lbrace k, k+1 \right\rbrace$ and $( s_k, s_{k +1}) = ( +, -) = ( \tilde{s}_{k+1}, \tilde{s}_{k})$. The respective characteristic values are given by $c_k = z_k +s_k t_k$  and $\tilde{c}_k = z_k +\tilde{s}_k t_k$.
\newline Now observe the property  $c_l = \tilde{c}_l \ \forall l \notin \left\lbrace k, k+1 \right\rbrace$ and $c_k = \tilde{c}_{k +1}$, $\tilde{c}_{k+1} = c_k.$ Let $\pi$ be the permutation that leads to $c_{\pi(1)} \leq \dots \leq c_{\pi(N)}$. The permutation $\sigma$ needed to achieve $\tilde{c}_{\sigma(1)} \leq  \dots \leq \tilde{c}_{\sigma(N)}$  is given by $\sigma = \tau_k \circ \pi$, and it has one collision less than $\pi$ with respect to the indices $\tilde{s}_k$. Inserting this into eq.\@ \eqref{eq:Formulaforsolutions} yields
\begin{flalign} 
\psi_{\tilde{s}_1 \dots \tilde{s}_N} & \overset{\eqref{eq:Formulaforsolutions}}{=} e^{i \phi^{\sigma}_{\tilde{s}_1 \dots \tilde{s}_N}} g_ {\tilde{s}_{\sigma(1)}  \dots \tilde{s}_{\sigma(N)}} \left( \tilde{c}_{\sigma(1)}, ... , \tilde{c}_{\sigma(N)} \right) & 
\nonumber\\ & \overset{\eqref{eq:PhaseCondition}}{=} e^{i \phi^{\pi}_{s_1 \dots s_N}} e^{-i \varphi^{(k)} } g_ {s_{\pi(1)}  \dots s_{\pi(N)}} \left( c_{\pi(1)}, ..., c_{\pi(N)} \right)  & 
\nonumber\\ & \overset{\eqref{eq:Formulaforsolutions}}{=} e^{-i \varphi^{(k)}} \psi_{s_1 \dots s_N} . & 
\end{flalign}
This shows that \eqref{eq:probconsbdc} is valid. 
\end{enumerate}
These four points establish existence; the function given in \eqref{eq:Formulaforsolutions} is indeed the solution of the IBVP.\qed
\end{proofT}

\begin{remark}
\begin{enumerate}
 \item Uniqueness of solutions can also be proven differently than by invoking thm.\@ \ref{thm:ProbgivesUniquenessTheorem}, namely by directly showing that every solution of the IBVP has to fulfil eq.\@ \eqref{eq:Formulaforsolutions}. A possible proof is via induction over the number of collisions.
 \item Note that on purely dimensional grounds it is remarkable that solutions for the IBVP with boundary conditions \eqref{eq:probconsbdc} do exist. Because the dimension of $\mathscr{C}_{k,k+1}$ is $(N-1)(1+d)$ which for $N>1+d$ is greater than $Nd$, the dimension of the initial data surface $\mathcal{I}\cap \mathscr{S}_1$, one might have suspected this not to be the case.
\end{enumerate}

\end{remark}


\section{Interaction and effective potential} \label{sec:interaction}

In addition to the mathematical and physical features already established, we now prove in this section that our model is interacting. Moreover, we outline how the interaction can be described effectively at equal times using self-adjoint extensions of the free two-particle Dirac Hamiltonian.

In \cite{1d_model}, a general criterion for interaction was given. We call a physical model \textit{interacting} iff it generates entanglement, i.e.\@ if there exist wave functions that are initially product states and become entangled during time evolution. Note that for the antisymmetrized wave functions we are considering, a product means ``wedge product''.
We now present a simple argument why our model is interacting in this sense. 

\begin{lemma} \label{thm:Interactionlemma}
The model defined by \eqref{eq:model} with $\Omega = \mathscr{S}_1$ and boundary conditions \eqref{eq:probconsbdc} is interacting if there exists $k \in \left\lbrace 1, \dots, N-1 \right\rbrace$ with $\varphi^{(k)} \neq \pi$.
\end{lemma}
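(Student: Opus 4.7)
The plan is to prove interaction by constructing an initial wedge-product state that evolves, via the explicit formula \eqref{eq:Formulaforsolutions}, into a state that is not a wedge product. First, a preliminary observation identifies where the interaction can come from: if $\varphi^{(k)} = \pi$ for every $k$, the boundary condition \eqref{eq:probconsbdc} reads $\psi_{\dots +-\dots} = -\psi_{\dots -+\dots}$ on $\mathscr{C}_{k,k+1}$, which is exactly the antisymmetry condition \eqref{eq:antisymmetrycond} for the neighbouring transposition $\tau_k$ evaluated at the coincidence $x_k = x_{k+1}$, and is therefore automatically satisfied by any continuous antisymmetric $\psi$. In that case the dynamics reduces to the free antisymmetric multi-time Dirac evolution, which trivially preserves wedge products; so interaction can only come from deviations of some $\varphi^{(k)}$ from $\pi$.

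For the converse, fix $k$ with $\varphi^{(k)} \neq \pi$. I would choose single-particle spinor wave functions $\phi^{(j)}$, $j = 1, \dots, N$, with pairwise disjoint compact spatial supports, arranged so that $\phi^{(k)} = (a,b)^{\top}$ is supported near $z = -R$ and $\phi^{(k+1)} = (c,d)^{\top}$ near $z = +R$ (all four scalar profiles nonzero), while the remaining $\phi^{(j)}$ live at $|z| \gg R$ so that they do not interfere during the relevant time interval. Take the wedge product as initial datum and apply \eqref{eq:Formulaforsolutions} at time $t \approx R$, restricted to the region of $\mathscr{S}_1$ in which slots $k, k+1$ have crossed but the spectators remain isolated. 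The spectator variables factor out, and the non-trivial two-particle part reduces to an $N = 2$ problem: $\psi_{\dots ++\dots}$ and $\psi_{\dots --\dots}$ are rank-one products from free characteristics, whereas $\psi_{\dots +-\dots}$ has two disjoint rank-one support regions (one pre-collision and one post-collision), with the post-collision piece acquiring an extra factor $-e^{i\varphi^{(k)}}$ via the phase rule \eqref{eq:PhaseCondition}.

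The heart of the argument is then a rank test. Suppose for contradiction that the evolved state is a wedge product of some modified single-particle Dirac solutions $\tilde\phi^{(j)}$. The disjoint-support geometry, together with rank-one factorization on each support piece, forces $\tilde\phi^{(k)}(t,z) = (\alpha_1 a(z+t), \beta_1 b(z-t))$ and $\tilde\phi^{(k+1)}(t,z) = (\alpha_2 c(z+t), \beta_2 d(z-t))$ for some scalars $\alpha_i, \beta_i$, because only one of the two terms in the wedge ansatz survives on each of the support pieces. Matching then yields the four conditions $\alpha_1\alpha_2 = 1$ (from $\psi_{\dots ++\dots}$), $\beta_1\beta_2 = 1$ (from $\psi_{\dots --\dots}$), $\alpha_1\beta_2 = 1$ (from the pre-collision piece of $\psi_{\dots +-\dots}$) and $\alpha_2\beta_1 = -e^{i\varphi^{(k)}}$ (from the post-collision piece). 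Multiplying the first two gives $\alpha_1\alpha_2\beta_1\beta_2 = 1$, while multiplying the third and fourth gives the same quantity equal to $-e^{i\varphi^{(k)}}$; consistency therefore forces $e^{i\varphi^{(k)}} = -1$, contradicting the hypothesis, so the evolved state is entangled. The main technical obstacle is the bookkeeping of which multi-time characteristics cross $\mathscr{C}_{k,k+1}$ in the general $N$-particle configuration; placing the spectator supports sufficiently far away ensures that only the primary $(k,k+1)$ crossing contributes and reduces the whole problem to the transparent $N = 2$ rank count above.
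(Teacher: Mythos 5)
Your mechanism is the same one the paper exploits: the explicit solution formula \eqref{eq:Formulaforsolutions} gives the exchanged (post-collision) piece of the wave function an extra phase $e^{i\varphi^{(k)}}$ relative to the direct piece, and this relative phase obstructs factorization unless $\varphi^{(k)}=\pi$. The packaging differs, though. The paper takes a generic product initial datum, uses antisymmetry to relate $g_{+-+\dots+}$ and $g_{-++\dots+}$, and evaluates the single component $\psi_{+-+\dots+}$ at an equal-time configuration, obtaining $\alpha(c_1)\beta(c_2)\zeta(c_3,\dots,c_N)\bigl(\Theta(c_2-c_1)-e^{i\varphi^{(1)}}\Theta(c_1-c_2)\bigr)$ and reading off non-factorizability from the $\Theta$-jump across $c_1=c_2$. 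You instead use disjointly supported wave packets and test the full spinor against a wedge-product ansatz, deriving the contradiction $1=\alpha_1\alpha_2\beta_1\beta_2=(\alpha_1\beta_2)(\alpha_2\beta_1)=-e^{i\varphi^{(k)}}$. This determinant obstruction is correct and is in one respect more careful than the paper's own argument, since it confronts the fact that a component of a wedge product is a \emph{sum} of products rather than a single product.

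There is, however, a genuine gap at the pivotal step: the claim that disjoint supports and rank-one factorization ``force'' $\tilde\phi^{(k)},\tilde\phi^{(k+1)}$ to be componentwise rescalings of the freely evolved packets is asserted, not proved, and it is not automatic. A wedge product does not determine its factors: the replacement $\tilde\phi^{(k)}\mapsto\tilde\phi^{(k)}+\lambda\,\tilde\phi^{(k+1)}$ leaves $\tilde\phi^{(k)}\wedge\tilde\phi^{(k+1)}$ unchanged while destroying the support separation you rely on, and a sum of two rank-one kernels restricted to a product of disjoint neighbourhoods can be rank one without either summand vanishing there. So ``only one term of the wedge survives on each piece'' needs an argument; you also leave unused the constraint that $\psi_{\dots-+\dots}$ vanishes identically in the post-collision region of $\mathscr{S}_1$, which is part of what pins the supports down. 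A clean way to close the gap without any normalization is to replace the four matching conditions by the single Pl\"ucker relation for decomposability of an antisymmetric two-particle state, evaluated at one quadruple of points taken from the four packet locations: two of the three terms involve $\psi_{\dots+-\dots}$ at configurations where it vanishes, and the surviving identity reads $(1+e^{i\varphi^{(k)}})\,a\,b\,c\,d=0$, which is your contradiction. Finally, your preliminary remark for $\varphi^{(k)}=\pi$ is slightly off: the boundary condition is then not automatically satisfied by an antisymmetric $\psi$ (the one-sided limits at $\partial\mathscr{S}_1$ may jump); rather it reduces to continuity across the boundary, which is what makes the dynamics free -- this is also how the paper phrases it in sec.\@ \ref{sec:interaction}.
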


\begin{proof}
W.l.o.g.\@ $k=1$.
Let the initial conditions be such that $\left. \psi \right|_{\mathcal{I}}$ is a product wave function. In particular, this means that there exist functions $\alpha, \beta, \gamma, \delta \in C^m (\R, \C) $ and  $\zeta \in C^m (\R^{N-2}, \C) $  with
\begin{align}
\nonumber g_{+-+\dots+}(z_1,...,z_N) ~&=~ \alpha (z_1) \beta (z_2) \zeta (z_3, ..., z_N) \\
g_{-++\dots+}(z_1,...,z_N) ~&=~ \gamma (z_1) \delta (z_2) \zeta (z_3, ..., z_N)
\end{align}
for $z_1 \leq \dots \leq z_N$.\\
Antisymmetry \eqref{eq:antisymmetrycond} implies
\begin{equation}
\alpha (z_1) \beta (z_2)~=~  - \gamma (z_2) \delta (z_1) .
\label{eq:concreteexampleisantisymm}
\end{equation}
 Consider the solution at a point $p=(t, z_1, ..., t ,z_N) \in \mathscr{S}_1$ with common time $t>0$. The auxiliary claim in the proof of lemma \ref{thm:phiwelldefined} implies that the characteristic values at $p$ with respect to the component $\psi_{+-+\dots+}$ are in ascending order iff $z_1 + t \leq z_2 - t$. Thus we can use formula \eqref{eq:Formulaforsolutions} to obtain $\psi_{+-+\dots+}(p)$, with the permutation $\pi$ being the identity if $z_1 \leq z_2 - 2t$ and the transposition $\tau_1$ if $z_1 > z_2 - 2t$. Written via the Heaviside function $\Theta$, this yields
\begin{flalign} \nonumber
\psi_{+-+\dots+}(p) ~=~ &~ g_{+-+\dots+} (c_1, ... c_N) \, \Theta\left( z_2 - z_1 - 2t \right)
\\ \nonumber & + e^{ i \varphi^{(1)}} g_{-++\dots+} (c_2, c_1, c_3, ...,c_N) \, \Theta\left( 2t + z_1 - z_2 \right)
\\ \nonumber ~=~ & ~\alpha (c_1) \beta (c_2) \zeta (c_3, ..., c_N) \, \Theta\left( z_2 - z_1 - 2t \right)
\\ & + e^{ i \varphi^{(1)}} \gamma(c_2) \delta(c_1) \zeta (c_3, ..., c_N)  \, \Theta\left( 2t + z_1 - z_2 \right).
\label{eq:ComponentnotProduct}
\end{flalign}   
Using \eqref{eq:concreteexampleisantisymm}, the expression simplifies to
\begin{equation}
\psi_{+-+\dots+}(p) ~=~ \alpha (c_1) \beta (c_2) \zeta (c_3, \dots , c_N) \left( \Theta (c_2 - c_1) - e^{i \varphi^{(1)}} \Theta (c_1 - c_2)  \right).
\end{equation}
This function contains the Heaviside function of a combination of $t$, $z_1$ and $z_2$ in a non-factorizable way. The $\Theta$-function cannot be left away for general initial values (as might be the case if they were zero in some regions). Furthermore, because of the prefactor $e^{ i \varphi^{(1)}}$ of the second summand, we cannot write it as a product as long as $\varphi^{(1)} \neq \pi$.
\qed
\end{proof}

\paragraph{Effective single-time model:} In the following, we show how an effective single-time model can be obtained from our model when considered at equal times $t_1 = t_2 =t$. Even though manifest Lorentz invariance is lost for a single-time model, we consider it instructive to connect with this familiar setting.\\
For simplicity, let $N=2$. We denote the single-time wave function by $\chi(z_1,z_2,t) := \psi(t,z_1,t,z_2)$. Then the single-time model is given by the domain $\{ (z_1,z_2,t) \in \R^3 : z_1 \neq z_2 \}$, initial data at $t = 0$, boundary conditions \eqref{eq:probconsbdc} (with $t_1,t_2$ replaced by $t$ in all the constructions) and the wave equation
\begin{equation}
i \frac{\partial \chi}{\partial t} ~=~  -i \left( \sigma_3 \otimes \id \frac{\partial}{\partial z_1} + \id \otimes \sigma_3 \frac{\partial}{\partial z_2} \right) \chi ~\equiv~ \hat{H} \chi.
\label{eq:singletimedirac}
\end{equation}
Note that eq.\@ \eqref{eq:singletimedirac} is obtained from the multi-time equations \eqref{eq:multitimedirac} by the chain rule.\\
We introduce new coordinates $u = \frac{1}{2} (z_1 - z_2)$ and $v = \frac{1}{2} (z_1 + z_2)$. The Hamiltonian becomes
\begin{equation}
\hat{H} ~=~ -i \, \diag(\partial_v, \partial_u, - \partial_u, - \partial_v).
\label{eq:h0diag}
\end{equation}
The boundary condition \eqref{eq:probconsbdc} can be reformulated using antisymmetry, i.e.\@ $\chi_2 (u, v,t) = - \chi_3 (-u, v,t)$, with the result
\begin{align} \label{eq:BCPotential}
\lim_{u \nearrow 0} \chi_2(u,v,t) & ~=~ \lim_{u  \searrow 0} -e^{-i \varphi} \chi_2 (u, v,t),~~t \in \R, \nonumber \\ 
\lim_{u \nearrow 0} \chi_3(u,v,t) & ~=~ \lim_{u  \searrow 0} -e^{i \varphi} \chi_3 (u, v,t),~~\,~ t \in \R.
\end{align}
The components $\chi_1$ and $\chi_4$ evolve freely and have to be continuous and zero at $u = 0$ because of antisymmetry. For $\varphi = \pi$, eq.\@ \eqref{eq:BCPotential} also reduces to the condition of continuity. In that case, the model becomes free -- in agreement with lemma \ref{thm:Interactionlemma}.
\\ The boundary conditions \eqref{eq:BCPotential} can be implemented in a rigorous functional-analytic way by a family of self-adjoint extensions of $\hat{H}$ parametrized by $\varphi$ \cite{LukasThesis}. Moreover, the unitary groups generated by these self-adjoint extensions can also be obtained as the limits of unitary groups generated by Hamiltonians $\hat{H}_n = \hat{H} + V_n$ with
\begin{equation}
 V_n(u) ~=~ \diag \left( 0, V_n(u), -V_n(u), 0 \right),
\label{eq:v}
\end{equation}
where the potentials converge to the $\delta$-function, $V_n(u) \rightarrow (\pi-\varphi)\delta(u)$, in the distributional sense. In essence, this is what the physics literature shows \cite{kellarstephenson,diracpotential}\footnote{Note that \cite{kellarstephenson,diracpotential} consider a one-particle Dirac equation. A comparison with these papers is never\-the\-less possible since, as evident from eqs.\@ \eqref{eq:h0diag}, \eqref{eq:v} the single-time equation \eqref{eq:singletimedirac} with additional potential \eqref{eq:v} decouples and $\tilde{\chi} := (\chi_2,\chi_3)$ satisfies a one-particle Dirac equation in $u$, i.e. $ i \partial_t \tilde{\chi}(u,v) = \left[ -i \sigma_3 \partial_u + \sigma_3 V_n(u) \right] \tilde{\chi}(u,v)$.}.
Recalling $u = \frac{1}{2}(z_1-z_2)$, our model can therefore be considered a relativistic version of the $1+1$-dimensional multi-particle Dirac equation with a spin-dependent $\delta(z_1 - z_2)$-potential for $\varphi \neq \pi$.


\section{Non-existence of solutions for configurations with a minimal space-like distance} \label{sec:spacelikeconfigs}

In this section, we leave the setting of the model \eqref{eq:model}, just retaining the multi-time Dirac equation \eqref{eq:multitimedirac}, and focus on a different aspect: the question whether a consistent Lorentz-invariant and probability-conserving dynamics exists on the domain $\mathscr{S}_\alpha$ of space-like configurations with a minimum space-like distance $\alpha$.\\
The relevance of the question is motivated as follows: one may ask whether an immediate generalization of the model \eqref{eq:model} to higher dimensions is possible which still produces interaction. This question was answered negatively in \cite{1d_model}, the reason being that for $d > 1$ the probability flux across the boundary vanishes without the need for boundary conditions and thus, by theorem \ref{thm:ProbgivesUniquenessTheorem}, uniqueness of solutions immediately holds. The model then corresponds to the free case. The main point is that for $d > 1$, the set of coincidence points in space-time is too low-dimensional to have impact on the dynamics. Therefore, it suggests itself to ask whether a change in the domain (and thereby of its boundary) can be made such that the dimension of the set across which the probability flux could leave the boundary is increased. A natural choice is the set of $\alpha$-\textit{space-like configurations} (here for $N = 2$):
\begin{equation}
  \mathscr{S}_{\alpha} ~=~ \left\lbrace (t_1,  \mathbf{x}_1, t_2, \mathbf{x}_2) \in \R^{1+d} \times \R^{1+d} : (t_1 - t_2)^2 - (\mathbf{x}_1 - \mathbf{x}_2)^2 < -\alpha^2 \right\rbrace . 
\end{equation} 
$\partial \mathscr{S}_{\alpha}$ has dimension $2d + 1$ and its intersection with $\Sigma \times \Sigma$, the set appearing in the proof of probability conservation \cite{1d_model}, has dimension $2d$ which is sufficient to have impact on the dynamics (the reason being that $\omega_j$ is a $2d$-form). Compare with the dimension $1+d$ of the set of coincidence points.  However, because $\partial \mathscr{S}_{\alpha}$ itself has a dimension greater than $2d$, the dimension of the initial data surface, the question arises if there is a consistent dynamics on it at all. In the following, we approach this question for the simplest case, $d =1$, for which the necessary mathematical tools are available, and show that the answer is negative. 
\\ First we show that there can only be one kind of boundary conditions with the desired properties. In a second step we then prove that the corresponding IBVP on $\mathscr{S}_{\alpha}$ does not possess non-trivial solutions. We make use of the following definitions:
\begin{align}
\mathscr{S}^{+}_\alpha ~=~  \left\lbrace (t, z_1, t, z_2) \in  \mathscr{S}_{\alpha}: z_1 - z_2 > 0 \right\rbrace,
\nonumber
\\
\mathscr{S}^{-}_\alpha ~=~ \left\lbrace (t, z_1, t, z_2) \in  \mathscr{S}_{\alpha}: z_1 - z_2 < 0\right\rbrace.
\end{align}
We have: $\mathscr{S}_\alpha = \mathscr{S}^{+}_\alpha \cup \mathscr{S}^{-}_\alpha$.

\begin{lemma}
\label{thm:LemmaAlphaSpacelikePossibleBC}
Let $\alpha > 0$ and $N=2$. For the multi-time Dirac equations \eqref{eq:multitimedirac} on the domain $\mathscr{S}_{\alpha}$, there exist no other Poincar\'e invariant boundary conditions which lead to probability conservation on every space-like hypersurface and which are compatible with antisymmetry, than the ones given by:
\begin{equation}
\label{eq:GoodBCforSalpha}
\psi_{ + - }(p) ~=~  e^{\pm i \varphi}  \psi_{- + }(p) ~~\forall p \in  \partial \mathscr{S}^{\pm}_\alpha
\end{equation}
with a fixed $\varphi \in (-\pi,\pi]$. 
\end{lemma}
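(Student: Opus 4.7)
The plan is to mimic the probability-conservation and Lorentz-invariance arguments of Sections~\ref{sec:SectionProbConserv}--\ref{sec:li}, replacing the coincidence set $\mathscr{C}_1$ by the $3$-dimensional hyperboloid $\partial\mathscr{S}_\alpha$. First I would rerun the Stokes-theorem argument of Lemma~\ref{thm:probcons} verbatim with $\mathscr{S}_1$ replaced by $\mathscr{S}_\alpha$: given two space-like hypersurfaces $\Sigma,\Sigma'$, the interpolation volume $V_R \subset \overline{\mathscr{S}}_\alpha$ of eq.~\eqref{eq:StokesVolumen} has boundary $(\Sigma^2 \cap \mathscr{S}_\alpha) \cup ((\Sigma')^2 \cap \mathscr{S}_\alpha) \cup M_1 \cup M_2$ with $M_1 = V_R \cap \partial\mathscr{S}_\alpha$; compact support kills $M_2$ in the limit $R\to\infty$, $d\omega_j = 0$ applies, and probability conservation reduces to $\int_{M_1}\omega_j = 0$ for all $\Sigma,\Sigma'$. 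Varying $\Sigma,\Sigma'$ moves the tangent plane of $M_1$ through every $2$-plane in $T_p\partial\mathscr{S}_\alpha$, so this becomes the pointwise condition that the pullback of $\omega_j$ to $\partial\mathscr{S}_\alpha$ vanishes.

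By Poincar\'e invariance of the domain this pointwise condition needs to be analyzed only at a single reference configuration in each connected component. I would pick $p \in \partial\mathscr{S}^+_\alpha$ with $t_1 = t_2$ and $z_1 - z_2 = \alpha$, and parameterize a neighbourhood of $p$ in $\partial\mathscr{S}^+_\alpha$ by $(t_1,t_2,z_2)$ with $z_1 = z_2 + \sqrt{(t_1-t_2)^2+\alpha^2}$. At $p$ one has $dz_1 = dz_2$, so the term $j^{00}\,dz_1 \wedge dz_2$ in \eqref{eq:omegaj} vanishes; the remaining coefficients involve $j^{01},j^{10},j^{11}$ and, after evaluation along the lines of Lemma~\ref{thm:karltheodor}, produce linear relations among the $|\psi_{s_1 s_2}|^2$ that include the combination $|\psi_{+-}(p)|^2 = |\psi_{-+}(p)|^2$. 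The latter rewrites as $\psi_{+-}(p) = e^{i\varphi(p)}\psi_{-+}(p)$ for some phase $\varphi(p)$.

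The Lorentz-invariance analysis, which I expect to be the main obstacle, then restricts any admissible boundary condition to this single form. From \eqref{TransformationEquationforWaveFunction} one reads off that $\psi_{++}$ scales with $(\cosh\beta - \sinh\beta)^2$, $\psi_{--}$ with $(\cosh\beta + \sinh\beta)^2$, and $\psi_{+-},\psi_{-+}$ with $(\cosh\beta - \sinh\beta)(\cosh\beta + \sinh\beta) = 1$. Hence $\psi_{+-}$ and $\psi_{-+}$ are Lorentz scalars while $\psi_{++}$ and $\psi_{--}$ scale with distinct non-trivial factors, so no linear relation involving $\psi_{++}$ or $\psi_{--}$ can be form-invariant under boosts; the only Poincar\'e invariant boundary condition available is the phase relation between $\psi_{+-}$ and $\psi_{-+}$ already isolated. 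Furthermore the only Poincar\'e invariant built from the two four-vectors $x_1,x_2$ in $1+1$ dimensions is the Minkowski distance $(x_1-x_2)^\mu(x_1-x_2)_\mu$, which is constantly equal to $-\alpha^2$ on $\partial\mathscr{S}_\alpha$; consequently $\varphi$ must be constant on each connected component. Denote these constants by $\varphi^+$ on $\partial\mathscr{S}^+_\alpha$ and $\varphi^-$ on $\partial\mathscr{S}^-_\alpha$.

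Finally antisymmetry ties $\varphi^+$ and $\varphi^-$ together. For $p = (x_1,x_2) \in \partial\mathscr{S}^+_\alpha$ the swapped configuration $p^* := (x_2,x_1)$ lies in $\partial\mathscr{S}^-_\alpha$, and \eqref{eq:antisymmetrycond} gives both $\psi_{+-}(p) = -\psi_{-+}(p^*)$ and $\psi_{-+}(p) = -\psi_{+-}(p^*)$. Applying the boundary conditions at both $p$ and $p^*$ and eliminating $\psi_{-+}(p^*)$ yields $e^{i(\varphi^+ + \varphi^-)} = 1$, i.e.\ $\varphi^- \equiv -\varphi^+ \pmod{2\pi}$. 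Setting $\varphi := \varphi^+$ reproduces exactly \eqref{eq:GoodBCforSalpha}, and the Lorentz-covariance bookkeeping of the previous paragraph is precisely what rules out any alternative family of Poincar\'e invariant boundary conditions.
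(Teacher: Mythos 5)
Your overall architecture (Stokes' theorem on an interpolation volume, reduction to a boundary-flux condition, Lorentz invariance to force the phase to be constant, antisymmetry to relate the two components) tracks the paper's proof, and your final two steps are essentially the paper's. But there is a genuine gap at the central step, where you pass from $\int_{M_1}\omega_j = 0$ to the pointwise vanishing of $\omega_j$ on $\partial \mathscr{S}_\alpha$. Unlike the coincidence set $\mathscr{C}_1$, the boundary portion $M_1$ here always splits into two pieces $M^{(1)}, M^{(2)}$ lying in the two connected components $\partial\mathscr{S}^{+}_\alpha$ and $\partial\mathscr{S}^{-}_\alpha$, and because the interpolation volume is built from Cartesian squares $\Sigma^2$ it is automatically invariant under the particle swap: every admissible $M_1$ contains each configuration together with its swapped partner. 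No amount of varying $\Sigma, \Sigma'$ can therefore isolate the contribution near a single boundary point; the integral condition only constrains the combination of $\int_{M^{(1)}}\omega_j$ and $\int_{M^{(2)}}\omega_j$, and these two contributions could a priori cancel each other instead of vanishing individually — exactly the alternative that the paper's subsequent remark says does survive for distinguishable particles. The paper closes this gap by invoking antisymmetry already at this stage: eq.\@ \eqref{eq:antisymmetrycond} gives $\omega_j(t,z,t,z+\alpha) = -\omega_j(t,z+\alpha,t,z)$, hence $\int_{M^{(2)}}\omega_j = -\int_{M^{(1)}}\omega_j$, and combined with the Stokes identity both integrals vanish, yielding $\omega_j(p)=0$ pointwise. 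You invoke antisymmetry only at the very end, to match $\varphi^+$ and $\varphi^-$, so your argument as written does not establish the pointwise condition.

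A secondary issue: even granting pointwise vanishing, you demand that the full pullback of the $2$-form $\omega_j$ to the $3$-manifold $\partial\mathscr{S}_\alpha$ vanish. That is stronger than what the Stokes argument delivers — it only probes the specific equal-time $2$-surfaces $M^{(j)}$, on which $\omega_j$ reduces to $(|\psi_{-+}|^2 - |\psi_{+-}|^2)\, dt \wedge dz$ — and, taken literally, it would impose extra relations (e.g.\@ the coefficient of $dt_1\wedge dt_2$, involving $j^{11}$ and hence $\psi_{++}$ and $\psi_{--}$) that are not part of the lemma and would over-determine the boundary condition. The paper evaluates $\omega_j$ only on those equal-time boundary surfaces and obtains exactly $|\psi_{+-}|^2 = |\psi_{-+}|^2$ and nothing more.
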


\begin{proof}

Let $p = (t_1, z_1, t_2, z_2) \in \partial \mathscr{S}_{\alpha}$. Because the two points $(t_1, z_1)$ and $(t_2, z_2)$ are space-like separated, there is a Lorentz frame with $t_1 = t_2$. We work in this frame, so we can write either $p = (t_p, z, t_p, z + \alpha)$ or $p = (t_p, z, t_p, z - \alpha)$. The idea is to use Stokes' theorem in a similar way as in the proof of lemma \ref{thm:probcons} to obtain a condition for probability conservation on equal-time hypersurfaces $\Sigma_{\tau_1}, \Sigma_{\tau_2}$ in the considered Lorentz frame. Here, w.l.o.g.\@ $\tau_1 < \tau_2$. Let
\begin{equation} 
V ~:=~ \left\lbrace (t, z_1, t, z_2) \in \overline{\mathscr{S}}_{\alpha} : \tau_1 \leq t \leq \tau_2 \right\rbrace.
\end{equation}
$V$ plays the same role as $V_R$ in eq.\@ \eqref{eq:StokesVolumen} for $R$ sufficiently large. Following the strategy of the proof of lemma \ref{thm:probcons}, one deduces
\begin{equation}
0 ~=~ \int_{V} d \omega_j ~=~  \int_{\partial V} \omega_j 
\end{equation}
Note that in contrast to the proof of lemma \ref{thm:probcons} but similar to \cite[proof of thm.\@ 4.4]{1d_model}, there now exist two connected components of the domain $\mathscr{S}_\alpha$. Therefore, probability conservation in the form
\begin{equation}
 \int_{(\Sigma_{\tau_1} \times \Sigma_{\tau_1}) \cap \mathscr{S}_\alpha} \omega_j ~=~ \int_{(\Sigma_{\tau_2} \times \Sigma_{\tau_2}) \cap \mathscr{S}_\alpha} \omega_j
\end{equation}
is equivalent to
\begin{equation}
\label{eq:ProbconsinSalpha}
\int_{M^{(1)}} \omega_j ~=~ \int_{M^{(2)}} \omega_j
\end{equation}
where $M^{(j)} =  \left\lbrace (t, z_1, t, z_2) \in \partial \mathscr{S}_{\alpha}: \max \{ z_1, z_2 \} = z_j \ \wedge \tau_1 < t < \tau_2 \right\rbrace $ for $j=1,2$. Observe that from $(t, z_1, t, z_2)  \in M^{(j)}$ it follows that $z_j = z_{3-j} + \alpha$. Furthermore, antisymmetry implies: 
\begin{equation}
\omega_j (t, z, t, z + \alpha) ~=~ - \omega_j (t, z + \alpha, t ,z).
 \label{eq:antisymmetricomegaj}
\end{equation}
This can be seen from the fact that on $\mathscr{C}_1$, $\omega_j = (|\psi_{-+}|^2 - |\psi_{+-}|^2) \, d t \wedge d z$ (see the proof of lemma \ref{thm:karltheodor}).\\
Inserting \eqref{eq:antisymmetricomegaj} into eq.\@ \eqref{eq:ProbconsinSalpha} allows us to conclude:
\begin{equation}
\int_{M^{(1)}} \omega_j ~=~ - \int_{M^{(1)}} \omega_j  ~=~ 0.
\end{equation}
As this relation must hold for every $\tau_1, \tau_2$, we must have $\omega_j (p) = 0$.
In components:
\begin{equation}
| \psi_{+-}(p)|^2 - |\psi_{-+}(p)|^2 ~=~ 0 ~~ \Leftrightarrow ~~  \psi_{+-}(p) ~=~ e^{i \varphi(p)} \psi_{-+}(p),
\end{equation}
where $\varphi: \partial \mathscr{S}_\alpha \rightarrow (-\pi, \pi]$ could in principle be a function which is not constant.\\
Because $p$ is an arbitrary boundary point, this equation must hold on the whole of $\partial \mathscr{S}_\alpha$. Moreover, the requirement of Poincar\'e invariance has the consequence that $\varphi(p)$ has to be locally constant (see the example preceding lemma \eqref{thm:MyBCareProbconserving}). The domain $\mathscr{S}_\alpha$ has the two connected components $\mathscr{S}^{\pm}_{\alpha}$ and by antisymmetry one obtains:
\begin{equation}
\left. \varphi \right|_{ \mathscr{S}^{+}_{\alpha}} ~=~ - \left. \varphi \right|_{ \mathscr{S}^{-}_{\alpha}}.
\end{equation} 
Thus, indeed no other boundary conditions than \eqref{eq:GoodBCforSalpha} are permitted. \qed
\end{proof}

\begin{remark}
 \begin{enumerate}
  \item A similar proof for distinguishable particles shows that in this case another possibility appears: the two contributions in eq.\@ \eqref{eq:ProbconsinSalpha} could cancel instead of vanishing individually. However, this cancelling is not physically sensible because it would imply a non-vanishing current from $\mathscr{S}^+_\alpha$ to $\mathscr{S}^+_\alpha$ and vice versa. Provided the Born rule holds, the particles could then swap place instantaneously.
  \item The boundary conditions \eqref{eq:GoodBCforSalpha} do indeed imply Poincar\'e invariance and probability conservation. However, this will not be shown explicitly as they do not lead to the existence of dynamics (see the following lemma).
 \end{enumerate}

\end{remark}

\begin{lemma}
\label{thm:TheoremAlphaNoExistence}

Let $\alpha >0$ and consider the IBVP given by
\begin{equation}
\label{eq:AlphaIBVP}
\left\lbrace \begin{array}{rcl}
i  \gamma_k^{\mu} \partial_{k, \mu} \psi (t_1, z_1, t_2, z_2)  & = & 0 \ \ \mathrm{for} \ \ k= 1, 2,
\\ \psi (0, z_1, 0, z_2) & = & g(z_1, z_2),
\\ \psi_{ + - } &= & e^{i \varphi} \, \psi_{- + } \ \mathrm{on} \ \ \partial \mathscr{S}_\alpha
\end{array} \right.
\end{equation}
on the domain $\mathscr{S}_\alpha$. Here, $\varphi \in (-\pi,\pi]$ and $g: \{ (z_1,z_2) \in \R^2 : |z_1-z_2| > \alpha\} \rightarrow \C^4$ is supposed to be a $C^1$-function.\\
Then, if there exist real numbers $a_1<b_1<a_2<b_2$ with $g_{+-} (a_1, a_2) \neq g_{+-} (b_1, b_2)$ or $g_{-+} (a_1, a_2) \neq g_{-+} (b_1, b_2)$ the IBVP \eqref{eq:AlphaIBVP} does not have any $C^1$-solution.
\end{lemma}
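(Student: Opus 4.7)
The plan is to exploit the explicit characteristic form of solutions to reduce the boundary condition to a functional equation that is so strong it forces the two non-trivial components of $\psi$ to be globally constant on $\R^2$. This will be incompatible with any non-constant $g_{+-}$ or $g_{-+}$, yielding non-existence.

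By Lemma \ref{thm:generalsolution} (applied with $N = 2$) any $C^1$ solution of the multi-time Dirac equations on $\mathscr{S}_\alpha$ must have components of the form $\psi_{s_1 s_2}(t_1, z_1, t_2, z_2) = f_{s_1 s_2}(z_1 + s_1 t_1,\, z_2 + s_2 t_2)$ for some $C^1$ functions $f_{s_1 s_2} : \R^2 \to \C$. To recast the boundary condition I parameterize $\partial \mathscr{S}_\alpha \cap \{z_1 > z_2\}$: writing $X := z_1 + t_1$, $Y := z_2 - t_2$, $U := z_1 - t_1$, $V := z_2 + t_2$, a direct computation yields $(z_1 - z_2)^2 - (t_1 - t_2)^2 = (X - V)(U - Y)$, so on this component the defining relation becomes $(X-V)(U-Y) = \alpha^2$ with both factors positive. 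Setting $\rho := X - V > 0$ then gives $U = Y + \alpha^2/\rho$ and $V = X - \rho$, and the boundary condition $\psi_{+-} = e^{i\varphi} \psi_{-+}$ turns into
\begin{equation*}
f_{+-}(X, Y) \;=\; e^{i\varphi} f_{-+}\!\left(Y + \tfrac{\alpha^2}{\rho},\, X - \rho\right), \qquad \forall\, X, Y \in \R,\ \rho > 0 .
\end{equation*}

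The left-hand side is independent of $\rho$ while the right-hand side traces out a one-parameter family of arguments of $f_{-+}$. Differentiating in $\rho$ and writing $u := Y + \alpha^2/\rho$, $v := X - \rho$, I obtain
\begin{equation*}
\frac{\alpha^2}{\rho^2}\, \partial_1 f_{-+}(u, v) \;+\; \partial_2 f_{-+}(u, v) \;=\; 0 .
\end{equation*}
The crucial observation is that every point $(u, v) \in \R^2$ is realizable: for arbitrary $\rho > 0$ set $X := v + \rho$ and $Y := u - \alpha^2/\rho$, so the displayed identity must hold at the \emph{same} $(u, v)$ for every $\rho > 0$. Since the coefficient $\alpha^2/\rho^2$ then ranges over all of $(0, \infty)$, both partial derivatives of $f_{-+}$ must vanish at $(u, v)$. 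Hence $f_{-+}$ is globally constant on $\R^2$, and substituting back into the functional equation forces $f_{+-}$ to be globally constant as well.

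The initial conditions give $f_{+-}(z_1, z_2) = g_{+-}(z_1, z_2)$ and $f_{-+}(z_1, z_2) = g_{-+}(z_1, z_2)$ on the whole domain $\{(z_1, z_2) : |z_1 - z_2| > \alpha\}$ of $g$. Global constancy of $f_{\pm\mp}$ then forces both $g_{+-}$ and $g_{-+}$ to be constant there, contradicting the hypothesis that for some $a_1 < b_1 < a_2 < b_2$ either $g_{+-}(a_1, a_2) \neq g_{+-}(b_1, b_2)$ or $g_{-+}(a_1, a_2) \neq g_{-+}(b_1, b_2)$; this contradiction rules out any $C^1$ solution. I expect the only slightly non-routine step to be the geometric set-up in the second paragraph—checking that the $(X, Y, \rho)$-coordinates actually exhaust the boundary component and that every target point $(u, v) \in \R^2$ is reached as $(Y + \alpha^2/\rho, X - \rho)$; once this is verified, the derivative argument is a short calculus computation, and the rest is immediate.
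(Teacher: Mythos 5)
Your argument is correct in its essentials but takes a genuinely different route from the paper. The paper's proof constructs, for the given quadruple $a_1<b_1<a_2<b_2$, two explicit boundary points $(t_1,y_1,t_2,y_2)$ and $(s_1,x_1,s_2,x_2)$ lying on a common $\psi_{+-}$-characteristic and on the $\psi_{-+}$-characteristics through $(0,a_1,0,a_2)$ and $(0,b_1,0,b_2)$ respectively; chaining the boundary condition through these two points gives $g_{-+}(a_1,a_2)=g_{-+}(b_1,b_2)$ directly, at the price of solving eight coupled quadratic equations (relegated to the appendix) to show the points exist. You instead parameterize a whole component of $\partial\mathscr{S}_\alpha$ by $(X,Y,\rho)$, turn the boundary condition into a functional equation with the free parameter $\rho$, and differentiate. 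This is an infinitesimal version of the same ``chain along characteristics through the boundary'' idea, and it buys you the stronger conclusion --- stated only as a remark after the paper's proof --- that $g_{+-}$ and $g_{-+}$ must be globally constant, with no appendix computation and with both components $g_{+-}$, $g_{-+}$ treated at once. The surjectivity check you flag as the only non-routine step does go through: given $(X,Y)\in\R^2$ and $\rho>0$ one recovers $z_1-z_2=\tfrac{1}{2}(\rho+\alpha^2/\rho)>0$ and $(z_1-z_2)^2-(t_1-t_2)^2=\alpha^2$, so the map is a bijection onto the $z_1>z_2$ part of the boundary, and every $(u,v)$ is reached with every coefficient value $\alpha^2/\rho^2\in(0,\infty)$.

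Two points should be made explicit before the argument is complete. First, lemma \ref{thm:generalsolution} is stated on $\R^{2N}$; on a domain with boundary the equations only force $\psi_{s_1s_2}$ to be constant on each connected component of a characteristic intersected with the domain, so the existence of a single well-defined $f_{s_1s_2}$ on $\R^2$ per component of $\mathscr{S}_\alpha$ requires checking that these intersections are connected. They are: in the coordinates $a=c_1-c_2-2t_2$, $b=c_1-c_2-2t_1$ the intersection is the convex set $\{ab\ge\alpha^2,\ a>0,\ b>0\}$ (resp.\ $a,b<0$). The paper addresses the analogous issue in a footnote. Second, and related: $\mathscr{S}_\alpha$ has two connected components, so your computation on $\{z_1>z_2\}$ only yields constancy of the branch of $f_{\pm\mp}$ attached to that component, hence of $g_{\pm\mp}$ on $\{z_1-z_2>\alpha\}$; but the hypothesis points $(a_1,a_2)$ and $(b_1,b_2)$ satisfy $a_1<a_2$ and $b_1<b_2$ and therefore lie in $\{z_1-z_2<-\alpha\}$. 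You must run the sign-flipped but otherwise identical argument on the $z_1<z_2$ component (with $\rho=X-V<0$) to reach them. Both fixes are routine, so these are gaps in the write-up rather than in the idea.
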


\begin{proof}
Assume that there exist real numbers $a_1<b_1<a_2<b_2$ with $g_{-+} (a_1, a_2) \neq g_{-+} (b_1, b_2)$. The case of $g_{+-}$ is similar and will not be shown explicitly. Suppose that $\psi$ is a solution of \eqref{eq:AlphaIBVP}. We obtain a contradiction by constructing points $(t_1,y_1,t_2,y_2)$ and $(s_1,x_1,s_2,x_2) \in \mathscr{S}_\alpha$ which lie on the same multi-time characteristic with respect to the component $\psi_{+-}$ (see fig.\@ \ref{fig:FigureConstructionProofAlpha}).\\
The construction proceeds as follows:
\begin{enumerate}
\item Choose a point $(t_1, y_1, t_2, y_2)$ on the same multi-time characteristic of $\psi_{+-}$ as $(0, a_1, 0, a_2)$ and on the boundary of $\mathscr{S}_\alpha$, i.e.
\begin{equation} \label{eq:PointsAlphaContradiction1}
\left\lbrace \begin{array}{rcl}
a_1 & = & y_1 - t_1
\\ a_2 & = & y_2 + t_2
\\ (t_1 - t_2)^2 &=& (y_1 - y_2)^2 - \alpha^2.
\end{array} \right.
\end{equation}
This in particular implies: 
\begin{equation}
\label{eq-+durchgY}
\psi_{-+} (t_1, y_1, t_2, y_2) ~=~ g_{-+} (a_1, a_2).
\end{equation}
\item Consider the set of points $(s_1, x_1, s_2, x_2)$ on the same multi-time characteristic as $(0, b_1, 0, b_2)$ and on the boundary of $\mathscr{S}_\alpha$, i.e.
\begin{equation}  \label{eq:PointsAlphaContradiction2}
\left\lbrace \begin{array}{rcl}
b_1 & = & x_1 - s_1
\\ b_2 & = & x_2 + s_2
\\ (s_1 - s_2)^2 &=& (x_1 - x_2)^2 - \alpha^2.
\end{array} \right.
\end{equation}
This means
\begin{equation} 
  \label{eq:-+durchgX}
  \psi_{-+} (s_1, x_1, s_2, x_2) ~=~ g_{-+} (b_1, b_2).
\end{equation}
\item Now select a point $(s_1, x_1, s_2, x_2)$ on the same multi-time charac\-teris\-tic as $(t_1,y_1,t_2,y_2)$ with respect to the component $\psi_{+-}$, i.e.\@
\begin{equation}  \label{eq:PointsAlphaContradiction3}
\left\lbrace \begin{array}{rcl}
x_1 + s_1 &=& y_1 + t_1, 
\\ x_2 -s_2 &=& y_2 - t_2.
\end{array} \right.
\end{equation}
This implies that the value at $(t_1, y_1, t_2, y_2)$ can be obtained in two different ways: firstly by using the boundary condition at that point and secondly by going along the characteristic surface
\footnote{One may wonder how it is possible to have a path connecting the two points which neither leaves the characteristic nor the domain. This is achieved as follows. Concatenate the two linear paths from $(t_1, y_1, t_2, y_2)$ to $(t_1, y_1, s_2, x_2)$ and from $(t_1, y_1, s_2, x_2)$ to $(s_1, x_1, s_2, x_2)$, so first move the right point from $Y_2$ to $X_2$ and afterwards the left from $Y_1$ to $X_1$. One can see from the hyperbolas in figure \ref{fig:FigureConstructionProofAlpha} that this path only leaves $\mathscr{S}_\alpha$ at its endpoints.}
to $(s_1, x_1, s_2, x_2)$ and using the value from there. In formulas: 
\begin{flalign}
\psi_{+-} (t_1, y_1, t_2, y_2) ~&\overset{\mathrm{b.c.}}{=}~ e^{i \varphi} \, \psi_{-+} (t_1, y_1, t_2, y_2) \nonumber
\\  ~&\overset{\eqref{eq-+durchgY}}{=}~ e^{i \varphi} \, g_{-+} (a_1, a_2). \nonumber\\
&~\\
 \psi_{+-} (t_1, y_1, t_2, y_2) ~&\overset{\mathrm{char.}}{=}~ \psi_{+-} (s_1, x_1, s_2, x_2) \nonumber\\
& \overset{\mathrm{b.c.}}{=}~ e^{i \varphi} \, \psi_{-+} (s_1, x_1, s_2, x_2) 
\nonumber\\ &\overset{\eqref{eq:-+durchgX}}{=}~ e^{i \varphi} \, g_{-+} (b_1, b_2).
\end{flalign}
Thus:
\begin{equation}
g_{-+} (b_1, b_2) =  g_{-+} (a_1, a_2),
\label{eq:gequation}
\end{equation}
in contradiction to the assumption.
\end{enumerate}

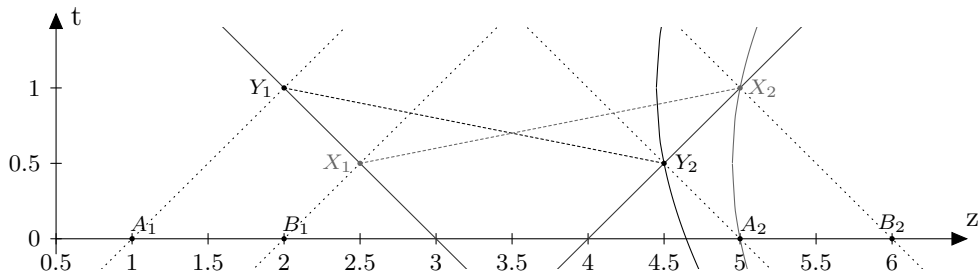
\begin{figure}[h]

\definecolor{tttttt}{rgb}{0.2,0.2,0.2}
\definecolor{wwwwww}{rgb}{0.4,0.4,0.4}
\begin{center}
\begin{tikzpicture}[line cap=round,line join=round,>=triangle 45,x=2.0cm,y=2.0cm]
\draw[->,color=black] (0.5,0) -- (6.5,0);
\foreach \x in {0.5,1,1.5,2,2.5,3,3.5,4,4.5,5,5.5,6}
\draw[shift={(\x,0)},color=black] (0pt,2pt) -- (0pt,-2pt) node[below] {\footnotesize $\x$};
\draw[color=black] (6.42,0.02) node [anchor=south west] { z};

\draw[->,color=black] (0.5,-0.1) -- (0.5,1.5);
\foreach \y in {0, 0.5, 1}
\draw[shift={(0.5, \y)},color=black] (2.0pt,0pt) -- (-2.0pt,-0pt) node[left] {\footnotesize $\y$};

\draw[color=black] (0.53,1.35) node [anchor=south west] { t};

\clip(0.5,-0.2) rectangle (6.5,1.4);
\draw[smooth,samples=100,domain=4.449494000000029:6.5] plot(\x,{1+sqrt(((\x)-2)^2-6)});
\draw[smooth,samples=100,domain=4.449494000000029:6.5] plot(\x,{1-sqrt(((\x)-2)^2-6)});
\draw[color=wwwwww, smooth,samples=100,domain=4.9494919999999984:6.5] plot(\x,{0.5+sqrt(((\x)-2.5)^2-6)});
\draw[color=wwwwww, smooth,samples=100,domain=4.9494919999999984:6.5] plot(\x,{0.5-sqrt(((\x)-2.5)^2-6)});
\draw [dotted,domain=0.5:6.5] plot(\x,{(--1-1*\x)/-1});
\draw [dotted,domain=0.5:6.5] plot(\x,{(--5-1*\x)/1});
\draw [dotted,domain=0.5:6.5] plot(\x,{(--6-1*\x)/1});
\draw [dotted,domain=0.5:6.5] plot(\x,{(--2-1*\x)/-1});
\draw [dash pattern=on 1pt off 1pt] (2,1)-- (4.5,0.5);
\draw [dash pattern=on 1pt off 1pt,color=wwwwww] (2.5,0.5)-- (5,1);
\draw [color=tttttt,domain=0.5:6.5] plot(\x,{(--1.5-0.5*\x)/0.5});
\draw [color=tttttt,domain=0.5:6.5] plot(\x,{(--2-0.5*\x)/-0.5});
\begin{scriptsize}
\fill [color=black] (1,0) circle (1.0pt);
\draw[color=black] (1.08,0.09) node {$A_1$};
\fill [color=black] (2,0) circle (1.0pt);
\draw[color=black] (2.08,0.09) node {$B_1$};
\fill [color=black] (6,0) circle (1.0pt);
\draw[color=black] (6.0,0.10) node {$B_2$};
\fill [color=black] (5,0) circle (1.0pt);
\draw[color=black] (5.09,0.09) node {$A_2$};
\fill [color=black] (2,1) circle (1.0pt);
\draw[color=black] (1.85,1.0) node {$Y_1$};
\fill [color=black] (4.5,0.5) circle (1.0pt);
\draw[color=black] (4.65,0.5) node {$Y_2$};
\fill [color=wwwwww] (2.5,0.5) circle (1.0pt);
\draw[color=wwwwww] (2.35,0.5) node {$X_1$};
\fill [color=wwwwww] (5,1) circle (1.0pt);
\draw[color=wwwwww] (5.15,1.0) node {$X_2$};
\end{scriptsize}

\end{tikzpicture}
\end{center}

\caption{\small Construction in the proof for values $a_1 = 1$, $b_1 = 2$, $a_2 = 5$, $b_2 = 6$ and $\alpha = \sqrt{6}$. The points are $A_j = (a_j, 0)$, $B_j = (b_j, 0)$,  and $Y_j=(y_j, t_j)$, $X_j = (x_j, s_j)$ for $j = 1, 2.$ The black hyperbola consists of points with space-like distance $\alpha$ to $Y_1$ and the grey one of thoses with space-like distance of $\alpha$ to $X_1$. The configurations $(X_1, X_2)$ and $(Y_1, Y_2)$ lie on the same multi-time-characteristic, comprised of the Cartesian product of the two solid black lines.} \label{fig:FigureConstructionProofAlpha}
\end{figure}
\noindent This proves the claim, provided the points we use do exist. Indeed, the combination of the eight equations \eqref{eq:PointsAlphaContradiction1}, \eqref{eq:PointsAlphaContradiction2} and \eqref{eq:PointsAlphaContradiction3} with eight unknowns leads to rather lengthy quadratic equations the general solution of which can be found in appendix \ref{sec:Appendix}. One explicit solution is given in the figure.
\qed
\end{proof}

\begin{remark}
The lemma shows that the most general Lorentz invariant and probability-conserving IBVP \eqref{eq:AlphaIBVP} on $\mathscr{S}_\alpha$ is over-determined. Eq.\@ \eqref{eq:gequation} shows that the only admissible initial data are those for which $g_{-+}$ is constant (and thus also $g_{+-}$). Due to normalization, this constant has to be zero. The two other components are exactly those which are not affected by boundary conditions. Moreover, it becomes clear from the proof that the problem originates from the too high dimension of $\partial \mathscr{S}_\alpha$ which implies (regardless of initial conditions) that certain components of the wave function have to be constant on sets like the initial data surface. One cannot avoid this problem by simply prescribing boundary conditions only on a part of the boundary due to the requirement of Lorentz invariance.
\end{remark}

\section{Discussion}

In this work, we have developed a rigorous, interacting and explicitly solvable relativistic multi-time model for $N$ Dirac particles in $1+1$ dimensions. The main results are (a) the extraction of the class \eqref{eq:probconsbdc} of boundary conditions which are compatible with the requirements of antisymmetry, manifest Lorentz invariance and probability conservation, (b) the proof that uniqueness of solutions of the multi-time equations follows from probability conservation on space-like hypersurfaces, as well as (c) the proof of the existence of dynamics and the explicit formula for solutions. Concerning (a), we believe that this class is the only one compatible with the physical requirements which can be formulated for general $N$ (see the remark at the end of sec.\@ \ref{sec:li}). Furthermore, we showed that the interaction by boundary conditions on sets where the space-time coordinates of two particles coincide can, at equal times, be effectively regarded as given by a spin-dependent $\delta$-potential.\\
Our results show that even the strictest requirements of relativistic invariance, as em\-bodied by the multi-time formalism, can be rigorously satisfied. This makes clear one more time that direct relativistic quantum-mechanical interactions are not generally impossible. \mbox{No-go} theorems about relativistic interactions such as \cite{nogo_potentials,nointeraction} rather rule out only specific mechanisms for interactions. The possibility to explicitly solve the model is instructive for illustrating how the various physical requirements can be met in the multi-time formalism. The model therefore also has a certain pedagogical value. Moreover, it might serve to explicitly test general claims about relativistic quantum mechanics. The fact that the model possesses a conserved tensor current furthermore ensures compatibility with re\-alis\-tic quantum theories such as relativistic GRW models \cite{rel_grw,grwf} and relativistic Bohmian mechanics \cite{hbd_subsystems,hbd,rel_bm}. This is of particular interest because these 
theories have so far only been formulated for the non-interacting case, which is, as our model shows, not due to a problem inherent in these theories but only due to the fact that no rigorous interacting relativistic multi-time theory existed before.\\
Motivated by the question whether a generalization of the model to higher dimensions can be achieved via the introduction of a minimal space-like distance between the particles, we first treated the question of the existence of dynamics for these $\alpha$-spacelike configurations for $d=1$. The result was negative. This leads us to believe that also in higher dimensions a consistent dynamics on the domain of $\alpha$-spacelike configurations does not exist. See, however, \cite{2bd_current_cons} for an analysis of a different multi-time model for two interacting Dirac particles in $1+3$ dimensions.\\
Concerning further generalizations of our model, it should be possible to include non-zero masses. However, one then looses the possibility to explicitly solve the model and consequently a change of the mathematical tools is required (see the discussion of \cite{1d_model}). A further hint that the inclusion of masses should be unproblematic is given by the connection with $\delta$-interactions outlined in sec.\@ \ref{sec:interaction}. Once a self-adjoint extension has been found which implements the $\delta$-interaction, one can always add a bounded symmetric operator such as a mass term to the Hamiltonian. A different generalization would be the case of variable particle numbers. Similarly to Fock space, one could then consider the model on the different $N$-particle sectors and trying to relate them (see \cite{qftmultitime} for a discussion of multi-time quantum field theories). If successful, this would yield a rigorously interacting multi-time QFT model in $1+1$ dimensions which, in addition, might be 
explicitly solvable.

\subsection*{Acknowledgments}
We would like to thank Detlef Dürr for helpful discussions. M.L.\@ gratefully acknowledges financial support by the German National Academic Foundation.

\appendix

\section{Explicit formulas for the points used in the proof of lemma \ref{thm:TheoremAlphaNoExistence}}
\label{sec:Appendix}

In the following we give the solutions of the eight equations \eqref{eq:PointsAlphaContradiction1}, \eqref{eq:PointsAlphaContradiction2} and \eqref{eq:PointsAlphaContradiction3} which are used in the proof of lemma \ref{thm:TheoremAlphaNoExistence}:
\begin{align}
 y_1 ~&=~ a_1+\frac{1}{2} \left(-a_1+b_1+\frac{1}{2} (a_2-2 b_1+b_2)-\frac{1}{2} \xi \right), \nonumber
\\
2t_1 ~&=~  -a_1+b_1+\frac{1}{2} (a_2-2 b_1+b_2)-\frac{1}{2} \xi,
\nonumber
\\
y_2 ~&=~ a_1+\frac{1}{2} \left(-a_1+b_1+\frac{1}{2} (a_2-2 b_1+b_2)+\frac{1}{2} \xi \right),
\nonumber
\\
 t_2 ~&=~ \frac{a_2-b_2+2 \left(\alpha^2-b_1^2+2 b_1 b_2-b_2^2+ \left( b_2 - b_1 \right) \left(\frac{1}{2} (a_2-2 b_1+b_2)-\frac{1}{2} \xi \right) \right)}{ \left(4 b_1-4 b_2+ 2 (a_2-2 b_1+b_2)-2 \xi \right)},
\nonumber
\\
 x_1 ~&=~ b_1+\frac{1}{4} (a_2-2 b_1+b_2)+\frac{1}{4} \xi,
\nonumber
\\
 s_1 ~&=~ \frac{1}{4} (a_2-2 b_1+b_2)+\frac{1}{4} \xi,
\nonumber
\\
 x_2 ~&=~ \frac{b_2-\alpha^2-b_1^2+2 b_1 b_2-b_2^2+ \left( b_2 - b_1 \right) \left(\frac{1}{2} (a_2-2 b_1+b_2)-\frac{1}{2} \xi \right)}{\left(2 b_1-2 b_2+ (a_2-2 b_1+b_2)+ \xi \right)},
\nonumber
\\
 s_2 ~&=~ \frac{\alpha^2-b_1^2+2 b_1 b_2-b_2^2+ \left( b_2 - b_1 \right) \left(\frac{1}{2} (a_2-2 b_1+b_2)-\frac{1}{2} \xi \right)}{\left(2 b_1-2 b_2+ (a_2-2 b_1+b_2)+ \xi \right)},
\end{align}
where
\begin{equation}
\xi ~=~ \sqrt{\frac{(b_2 - a_2)^2 (b_1-a_1) + 4 \alpha^2 (b_2 - a_2)}{b_1-a_1}}.
\end{equation}
The radicand is positive since $a_1<b_1$ and $a_2<b_2$.


\end{document}